\journal{CSDA}
\newtheorem{thm}{Theorem}
\newtheorem{lem}[thm]{Lemma}
\newtheorem{prop}[thm]{Proposition}
\theoremstyle{definition}
\newtheorem*{defn*}{Definition}
\theoremstyle{remark}
\newtheorem{example}{Example}
\newcommand{\norm}[1]{\|#1\|}
\newcommand{\Real}{\mathbb R}
\newcommand{\eps}{\varepsilon}
\def\arg{\mbox{arg}}
\def\1v{{\bf 1}}
\def\0v{{\bf 0}}
\begin{document}

\begin{frontmatter}



\title{Continuum directions for supervised dimension reduction}


\author{Sungkyu Jung}

\address{Department of Statistics, University of Pittsburgh,
Pittsburgh, PA 15260, U.S.A.}
\ead{sungkyu@pitt.edu}
\begin{abstract}
Dimension reduction of multivariate data supervised by auxiliary information is considered. A series of basis for dimension reduction is obtained as minimizers of a novel criterion. The proposed method is akin to continuum regression, and the resulting basis is called continuum directions. With a presence of binary supervision data, these directions continuously bridge the principal component, mean difference and linear discriminant directions, thus ranging from unsupervised to fully supervised dimension reduction. High-dimensional asymptotic studies of continuum directions for binary supervision reveal several interesting facts. The  conditions under which the sample continuum directions are inconsistent, but their classification performance is good, are specified. While the proposed method can be directly used for binary and multi-category classification, its generalizations to incorporate any form of auxiliary data are also presented. The proposed method enjoys fast computation, and the performance is better or on par with more computer-intensive alternatives.
\end{abstract}

\begin{keyword}
continuum regression \sep dimension reduction \sep linear discriminant analysis \sep high-dimension, low-sample-size (HDLSS) \sep maximum data piling \sep principal component analysis

\MSC 60K35

\end{keyword}

\end{frontmatter}



\section{Introduction}

In modern complex data, it becomes increasingly common that multiple data sets are available. We consider the data situation where a supervised dimension reduction is naturally considered. Two types of data are collected on a same set of subjects: a data set of primary interest $X$ and an auxiliary data set $Y$.
The goal of supervised dimension reduction is to delineate major signals in $X$, dependent to $Y$. Relevant application areas  include genomics (genetic studies collect both gene expression and SNP data---\cite{li2016supervised}), finance data (stocks as $X$ in relation to characteristics $Y$ of each stock: size, value, momentum and volatility---\cite{connor2012efficient}), and batch effect adjustments \citep{lee2014covariance}.

There has been a number of work in dealing with the multi-source data situation.
\cite{lock2013joint} developed JIVE to separate joint variation from individual variations. Large-scale correlation studies can identify millions of pairwise associations between two data sets via multiple canonical correlation analysis \citep{witten2009extensions}. These methods, however, do not provide supervised dimension reduction of a particular data set $X$, since all data sets assume an equal role.

In contrast, reduced-rank regression  \cite[RRR,][]{izenman1975reduced,tso1981reduced} and envelop models \citep{cook2010envelope} provide {sufficient} dimension reduction \citep{cook2005sufficient} for regression problems.  See \cite{cook2013envelopes} for connections between envelops and partial least square regression.
Variants of principal component analysis (PCA) have been proposed to incorporate auxiliary information; see \cite{fan2016projected} and references therein.  Recently, \cite{li2016supervised} proposed SupSVD, a supervised PCA that encompasses regular PCA to RRR. Our goal is similar to that of SupSVD, which extends RRR and envelop models, in that the primary and auxiliary data sets play different roles.
We consider a basis (or subspace) recovery to extract the part of main data set which is relevant to the auxiliary data set. Unlike SupSVD, which provides a fully supervised dimension reduction, we seek a unified framework that covers a wide spectrum from fully-supervised to unsupervised dimension reduction.

A potential drawback of fully supervised dimension reduction as a preprocessing for further application of predictive modeling is a \emph{double-dipping} problem: The same signal is considered both at dimension reduction and at classifiers.
In  high dimensional data situations, small signals can sway the whole analysis, often leading to a spurious finding that can not be replicated in subsequent studies. A regularized semi-supervised dimension reduction has a potential to mitigate the double-dipping problem.

We propose a semi-supervised basis learning for the primary data that covers a wide range of spectrum from supervised to unsupervised dimension reduction. A meta-parameter $\gamma \in [0,\infty)$ is introduced to control the degrees of supervision. The spectrum of dimension reduction given by different $\gamma$ is best understood when there exists a single binary supervision. In such a special case, the directional vectors of the basis continuously bridge the principal component direction, mean difference and Fisher's linear discriminant directions.

The proposed method was motivated by the continuum regression \citep{Stone1990}, regressors ranging from the ordinary least square to the principal component regression. In the context of regression, our primary data set is predictors while the auxiliary data are the response. The new basis proposed in this work, called \textit{continuum directions},  can be used with multivariate supervision data, consisting of either categorical or continuous variables.

We also pay a close attention to the high-dimension, low-sample-size situations (or the $p \gg n$ case), and give a new insight on the
maximum data piling (MDP) direction $w_{MDP}$, proposed as a discriminant direction for binary classification by \cite{Ahn2010}. In particular, we show that $w_{MDP}$ is a special case of the proposed continuum direction, and if $p \gg n$, MDP is  preferable to linear discriminant directions in terms of Fisher's original  criterion for linear discriminant analysis \citep[LDA,][]{Fisher1936}. We further show, under the high-dimension, low-sample-size asymptotic scenario \citep{Hall2005}, although the empirical continuum directions are inconsistent with their population counterparts, the classification performance using the empirical continuum directions can be good, if the signal strength is large enough.

As an application of the continuum directions, we endeavor to use the continuum directions in classification problems.
Recently, numerous efforts to improve classifications for the $p \gg n$ situation have been made. Linear classifiers such as LDA, the support vector machine \citep{vapnik2013nature} or distance weighted discrimination \citep{Marron2007,Qiao2009} often yield better classification than nonlinear methods, in high dimensional data analysis. A recent trend is sparse estimations. \cite{Bickel2004} studied the independence rule, ignoring off-diagonal entries of $S_W$. Additionally assuming sparsity of the population mean difference, \cite{Fan2008} proposed the features annealed independence rule (FAIR).
\cite{Wu2009} and \cite{Shao2011} proposed sparse LDA estimations, and
\cite{Clemmensen:2011} proposed sparse discriminant analysis (SDA) to learn sparse basis for multi-category classification.
\cite{Cai2011} proposed the linear programming discriminant rule (LPD) for sparse estimation of the discriminant direction vector. The sparse LDA, SDA and LPD are designed to work well if their sparsity assumptions are satisfied. Sophisticated methods such as those of \cite{Wu2009} and \cite{Cai2011} usually suffer from heavy computational cost.
Our method, when applied to the binary classification problem, leads to  analytic solutions, and the computation times are scalable. We show via simulation studies that classification performance using the continuum directions is among the best when the true signal is not sparse and  the variables are highly correlated.

The rest of the paper is organized as follows. In Section~\ref{sec:CDD}, we introduce continuum directions and discuss its relation to continuum regression. In the same section, we provide some insights for continuum directions in high dimensions. In Section~\ref{sec:computation}, we show numerical procedures that are efficient for high-dimensional data. Simulation studies for classification performance in high dimensions can be found in Section~\ref{sec:sim}. We further show advantages of our method by a few real data examples in Section~\ref{sec:realdata}. We conclude with a discussion. Proofs are contained in Appendix.

\section{Continuum directions}\label{sec:CDD}

\subsection{Motivation} \label{sec:CDAbinary}

To motivate the proposed framework for dimension reduction, we first analyze a special case where the supervision data consist of a binary variable. We discuss a few meaningful directions for such situations, viewed in terms of a two-group classification problem. These directions are special cases of the continuum directions, defined later in (\ref{eq:CDAcriterion}).

Let $n_1$ and $n_2$ be the numbers of observations in each group and $n = n_1+n_2$. Denote $\{x_{11},\ldots,x_{1n_1}\}$ and $\{x_{21},\ldots,x_{2n_2}\}$ for the $p$-dimensional observations of the first and second group, respectively. In our study it is sufficient to keep the sample variance-covariances. Denote $S_W = \frac{1}{n}(\sum_{i=1}^{n_1}(x_{1i} - \bar{x}_{1\cdot})(x_{1i} - \bar{x}_{1\cdot})^T + \sum_{i=1}^{n_2}(x_{2i} - \bar{x}_{2\cdot})(x_{2i} - \bar{x}_{2\cdot})^T)$ for the within-group variance matrix, i.e. the estimated (pooled) common covariance, and $S_B = \frac{n_1n_2}{(n_1+n_2)^2} (\bar{x}_{1\cdot} - \bar{x}_{2\cdot})(\bar{x}_{1\cdot} - \bar{x}_{2\cdot})^T$ for the between-group variance matrix. The total variance matrix is $S_T = \frac{1}{n}(\sum_{i=1}^{n_1}(x_{1i} - \hat{\mu})((x_{1i} - \hat{\mu})^T + \sum_{i=1}^{n_2}(x_{2i} - \hat{\mu})(x_{2i} - \hat{\mu})^T)$ with the common mean $\hat{\mu} = (n_1 \bar{x}_{1\cdot} + n_2\bar{x}_{2\cdot})/n$, and $S_T = S_W + S_B$.

Fisher's criterion for discriminant directions is to find a direction vector $w$ such that, when data are projected onto $w$, the between-variance $w^TS_Bw$ is maximized while the within-variance $w^TS_Ww$ is minimized. That is, one wishes to find a maximum of
\begin{equation}\label{eq:Fishercriterion}
T(w) =  \frac{w^TS_Bw}{w^TS_Ww}.
\end{equation}
If $S_W$ is non-singular, i.e. the data are not collinear and $p \le n-2$, the solution is given by $w_{LDA} \propto S_W^{-1}d$, where $d = \bar{x}_{1\cdot} - \bar{x}_{2\cdot}$. It has been a common practice to extend the solution to the case $p > n-2$ using a generalized inverse, i.e.,
 $$w_{LDA} \propto S_W^{-}d,$$
 where $A^-$ stands for the Moore--Penrose pseudoinverse of square matrix $A$. 

In retrospect, when $\mbox{rank}(S_W) < p $, Fisher's criterion is ill-posed since there are infinitely many $w$'s satisfying $w^TS_Ww = 0$. Any such $w$, which also satisfies $w^TS_Bw > 0 $, leads to $T(w) = \infty$.
In fact, in such a situation, $w_{LDA}$ is not a maximizer of $T$ but merely a critical point of $T$.  
\cite{Ahn2010} proposed a maximal data piling (MDP) direction $w_{MDP}$ which maximizes the between-group variance $w^TS_Bw$ subject to $w^TS_Ww = 0$, and is
$$w_{MDP} \propto S_T^{-}d.$$
Note that $w_{MDP}$ also maximizes a criterion
\begin{equation}\label{eq:MDPcriterion}
T_{MDP}(w) = \frac{w^TS_Bw}{w^TS_Tw}.
\end{equation}
In the conventional case where $n \ge p$, the criteria (\ref{eq:Fishercriterion}) and (\ref{eq:MDPcriterion}) are equivalent up to a constant, and $w_{MDP} = w_{LDA}$. We discuss further in Section~\ref{sec:FisherMDP-LDA} that MDP is more preferable than LDA in the high-dimensional situations.

A widely used modification to Fisher's criterion is to shrink $S_W$ toward a diagonal matrix, leading to
\begin{equation}\label{eq:RRcriterion}
T_\alpha(w) =  \frac{w^TS_Bw}{w^T(S_W+ \alpha I)w}, \ \mbox{ for some } \alpha \ge 0.
\end{equation}
This approach has been understood in a similar flavor to ridge regression \citep{Hastie2009}. The solution of the above criterion is simply given by
$w^{R}_{\alpha} \propto (S_W+\alpha I)^{-1} d.$ A special case is in the limit $\alpha \to \infty$, where the solution $w^{R}_\infty$ becomes the direction of mean difference (MD) $w_{MD} \propto d$, which maximizes
\begin{equation}\label{eq:MDcriterion}
T_{MD}(w) = {w^TS_Bw},
\end{equation}
with a conventional constraint $w^Tw = 1$.


In high dimensional data situations, utilizing the principal components is a natural and nonparametric way to filter out  redundant noise. Principal component analysis (PCA) reduces the dimension $p$ to some low number $p_0$ so that the subspace formed by the first $p_0$ principal component directions contains maximal variation of the data among all other $p_0$-dimensional subspaces. In particular, the first principal component direction $w_{PC1} $ maximizes the criterion for the first principal component direction,
\begin{equation}\label{eq:PCAcriterion}
T_{PCA}(w) =\frac{w^TS_Tw}{w^Tw}.
\end{equation}

The important three directions of MDP, MD and PCA differ only in criteria maximized. With the constraint $w^Tw = 1$, the criteria (\ref{eq:MDPcriterion})--(\ref{eq:PCAcriterion}) are functions of total-variance $w^TS_Tw$ and between-variance $w^TS_Bw$. For the binary supervision case, a generalized criterion that embraces all three methods is
\begin{equation}\label{eq:CDAcriterion}
T_\gamma (w) =  (w^TS_Bw)(w^TS_Tw)^{\gamma - 1} \ \mbox{ subject to } \ w^Tw = 1,
\end{equation}
where $\gamma$ takes some value in $[0,\infty)$. The special cases are MDP as $\gamma \to 0$, MD at $\gamma = 1$, and PCA when $\gamma \to \infty$. The direction vector $w_\gamma$ that maximizes $T_\gamma$ is called the \emph{continuum direction} for $\gamma$.

\subsection{General Continuum directions}\label{sec:CDA_special}
The continuum direction (\ref{eq:CDAcriterion}) defined for the binary supervision is now generalized to incorporate any form of supervision.

Denote $X = [x_1,\ldots,x_n]$ for the $p\times n$ primary data matrix and $Y$ for the $r \times n$ matrix with secondary information. The matrix $Y$ contains the supervision information that can be binary, categorical, and continuous.
For example if the supervision information is a binary indicator for two-group classification with group sizes $n_1$ and $n_2$, as in Section~\ref{sec:CDAbinary}, then the matrix $Y$ can be coded as the $2 \times n $ matrix $Y^T = [ n_1(e_{1} - j_n) ; n_2(e_{2} - j_n)]$ where $j_n = n^{-1}(1,1,\ldots,1)^T = n^{-1}1_n$ and $e_k$ is the length-$n$ vector, whose $i$th element is $n_k^{-1}$ if the $i$th subject is in the $k$th group, and zero otherwise. Similarly, if the supervision information is multicategory with $K$ groups, then $Y$ is the $K \times n$ matrix whose $k$th row is $n_k(e_k - j_n)^T$, where $n_k$ is the number of observations belonging to category $k$. If the supervision is continuous and multivariate, such as responses in multivariate regression, then the matrix $Y$ would collect centered measurements of response variables.

Assuming for simplicity that $X$ is centered, we write the total variance-covariance matrix of $X$ by $S_T = n^{-1}XX^T$, and the $Y$-relevant variance-covariance matrix of $X$ by $S_B = n^{-1} (XY^T)(XY^T)^T$.
A completely \textit{unsupervised} dimension reduction can be obtained by eigendecomposition of $S_T$. On the contrary, a fully-supervised approach is to focus on the column space of $S_B$, corresponding to the mean difference direction when $Y$ is binary. An extreme approach that nullifies the variation in $X$ to maximize the signals in $Y$ can be obtained by eigendecomposition of $S_T^- S_B$. When $Y$ is categorical, this reduces to the reduced-rank LDA.

Generalizing (\ref{eq:CDAcriterion}), the following approach encompasses the whole spectrum from the supervised to unsupervised dimension reduction. A meta-parameter $\gamma \in [0, \infty)$ controls the degree of supervision.
For each $\gamma$, we obtain a basis $\{w_{(1)},\ldots, w_{(\kappa)}\}$ for dimension reduction of $X$ in a sequential fashion. In particular, given $w_{(1)},\ldots,w_{(k)}$, the
 $(k+1)$th direction is defined by $w$ maximizing \begin{align}\label{eq:CDA}
T_\gamma(w) &= (w^TS_Bw) (w^T S_T w)^{\gamma - 1},\\
\mbox{subject to}\ &\ w^T w = 1\ \mbox{ and } w^T S_T w_{(\ell)} = 0, \ \ell = 1,\ldots, k. \nonumber
\end{align}
%
The sequence of  directions $\{w_{(\ell)}: \ell = 1,\ldots,\kappa\}$ for a given value of $\gamma$ is then $S_T$-orthonormal to each other: $w_{(\ell)}^T w_{(\ell)} = 1, w_{(\ell)}^T S_T w_{(l)} = 0$ for $\ell \neq l$. An advantage of requiring $S_T$-orthogonality is that the resulting scores $z_{\ell,i} = x_i^T w_{(\ell)}$ are uncorrelated with $z_{l,i}$ for $\ell \neq l$. This is desirable if these scores are used for further analysis, such as a classification based on these scores.

In sequentially solving (\ref{eq:CDA}), choosing large $\gamma$ provides nearly unsupervised solutions while $\gamma\approx 0$  yields an extremely supervised dimension reduction.
The spectrum from unsupervised to supervised dimension reduction is illustrated in a real data example shown in Example~\ref{ex:lung cancer}.

\begin{example}\label{ex:lung cancer}
We demonstrate the proposed method of dimension reduction for  a real data set from a microarray study. This data set, described in detail in  \cite{Bhattacharjee2001}, contains
$p = 2530$ genes (primary data) from $n= 56$ patients  while the patients are labeled by four different lung cancer subtypes (supervision data). The primary dataset $X$ is the $p \times n$ matrix of normalized gene expressions, while the supervision data is the $4 \times n$ matrix  $Y$, coded to use the categorical cancer subtypes as the supervision.

The continuum directions can provide basis of dimension reduction, ranging from the unsupervised ($\gamma\approx \infty$) to the fully supervised ($\gamma\approx 0$). In Fig. \ref{fig:NeilHayesCDAm2}, the projected scores of the original data are plotted for four choices of $\gamma$.

\begin{figure}[tb]
\centering
\ifpdf
\vskip -0.2in
    \includegraphics[width=0.9\textwidth]{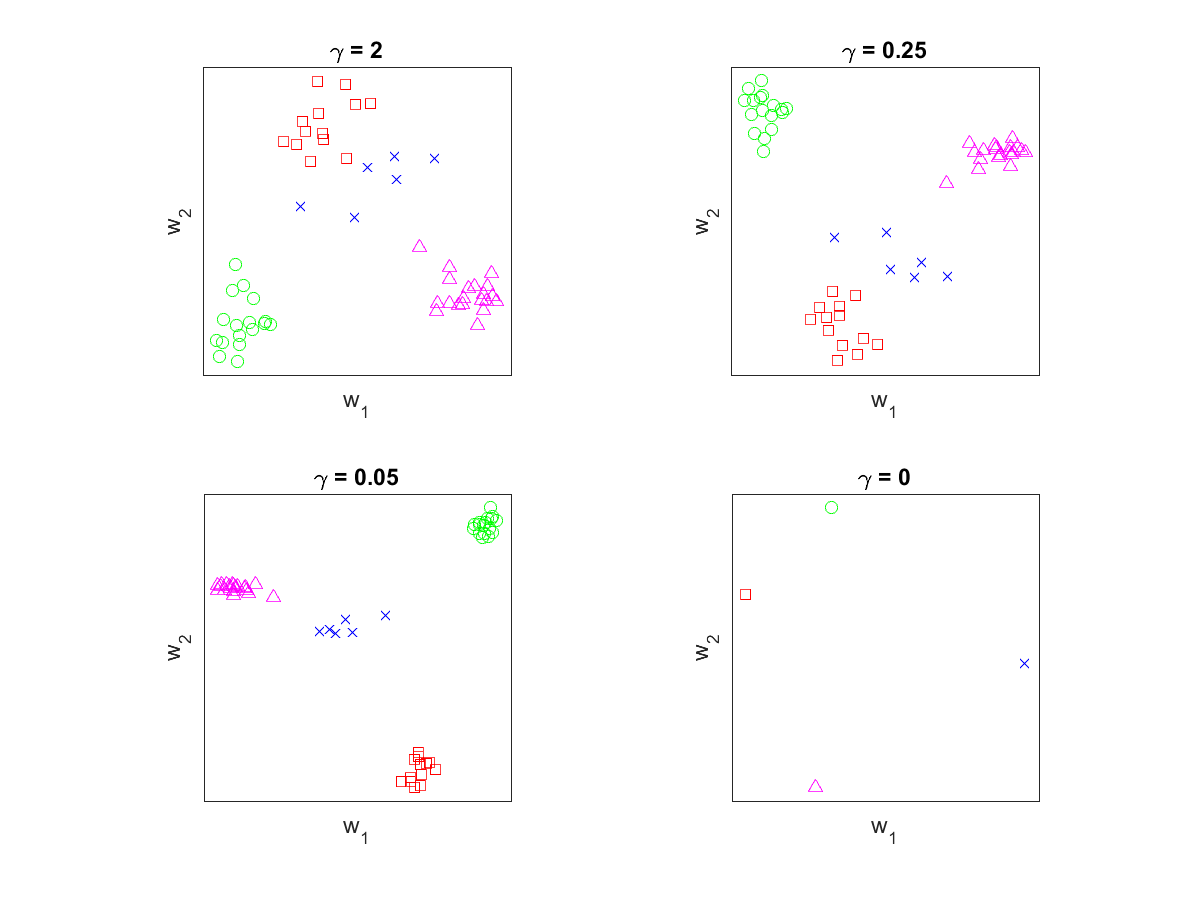}
\vskip -0.2in
\else
    \includegraphics[width=0.9\textwidth]{NeilHayesCDAm2.eps}
\fi
\caption{Spectrums of supervised dimension reduction for the data set of \cite{Bhattacharjee2001}. Shown are the projection scores to the first two continuum  directions, for various values of $\gamma$.}
\label{fig:NeilHayesCDAm2}
\end{figure}

A dimension reduction by PCA has been useful for this data set, since the four subtypes are visually separated by using the first few sample principal components. The principal component scores are similar to those plotted in the first panel of Fig.  \ref{fig:NeilHayesCDAm2} when $\gamma$ is large enough. On the other hand, a fully supervised dimension reduction given by the MDP directions,  plotted in the bottom right panel, nullifies any variation in the primary data set. Specifically, all observations corresponding to the same subtype project to a single point, a feature due to the high dimensionality. Thus the projected scores for $\gamma =0$ contain  information only relevant to the supervision.

The continuum direction as a function of $\gamma$ is continuous (shown later in Proposition~\ref{prop:RRtoMDP}), thus the projected scores are also continuous with respect to $\gamma$. The continuous transition of the scores from large $\gamma$ to small $\gamma$ in Fig.~\ref{fig:NeilHayesCDAm2} is thus expected.
The question of which value of $\gamma$ to use in final dimension reduction depends on the purpose of analysis.
For exploratory analysis, several values of $\gamma$ may be used to examine the data from  a variety of viewpoints.  If the dimension reduction is performed for regression or classification, a cross-validation can be used, which is discussed in Section~\ref{sec:classification}.

\end{example}

\subsection{Continuum directions for classification}\label{sec:classification}

When the supervision data is binary or categorical, it is natural to seek an application of continuum directions for the basis of classification. In particular, for the binary supervision case, as shown in Section~\ref{sec:CDAbinary}, the continuum direction $w_\gamma$ can be thought of as the normal direction to the separating hyperplane.

In the general $K$-group situation, for each $\gamma>0$, the sequence of directions $\{w_{(\ell)}: \ell = 1,\ldots, \kappa \}$ are used to obtain dimension-reduced scores $z_{\ell,i} = x_i^T w_{(\ell)}$, $\ell = 1,\ldots, \kappa$, for secondary discriminant analysis. In particular, we choose $\kappa = K-1$ and use $[z_{1,i},\ldots,z_{\kappa,n}]$, $i = 1,\ldots, n$, in training the ordinary LDA.
For a new observation $x_\ast$, the scores $z_{(\ell,\ast)} = x_\ast^T w_{(\ell)}$ are used for the prediction by the trained LDA.
This classification rule is called continuum discriminant analysis (CDA).

The CDA depends on the choice of $\gamma$. A 10-fold cross-validation to minimize the expected risk with the 0-1 loss can be used to tune $\gamma$. We use a cross-validation index $CV(\gamma)$ that counts the number of misclassified observations for each given $\gamma$, divided by the total number of training sample. As exemplified with real data examples in Section~\ref{sec:realdata}, the index $CV(\gamma)$ is typically U-shaped. This is because that the two ends of the spectrum are quite extreme. Choosing $\gamma =0$ results in the unmodified LDA or MDP, while choosing $\gamma \approx \infty$ results in using PC1 direction for classification. In our real data examples, the minimizer of $CV(\gamma)$ is found in the interval $[0.2, 2.19]$.

%
%

\subsection{Relation to continuum regression}\label{sec:ridge and CR}

A special case of the proposed method, specifically (\ref{eq:CDAcriterion}) for the binary supervision, can be viewed as a special case of continuum regression \citep{Stone1990}.
The continuum regression leads to a series of regressors that bridges ordinary least squares, partial least squares and principal component regressions. In connection with the continuum directions for binary classification, ordinary least squares regression corresponds to LDA (or MDP in (\ref{eq:MDPcriterion})), and partial least squares corresponds to mean difference. In particular, in the traditional case where $n > p$, it is well known that $w_{LDA}$ is identical to the vector of coefficients of least squares regression, up to some constant.

Some related work has shed light on the relationship between continuum regression and ridge regression \citep{Sundberg1993,DeJong1994,Bjorkstrom1999}. A similar relationship can be established for our case when $S_B$ is of rank 1. For simplicity, we assume that the column space of $S_B$ is spanned by the vector $d$.  (In the binary classification case, $d = \bar{x}_{1\cdot} - \bar{x}_{2\cdot}$, as discussed in Section~\ref{sec:CDAbinary}.)  To find the continuum direction $w_\gamma$ that maximizes $T_\gamma (w)$ in  (\ref{eq:CDAcriterion}), differentiating the Lagrangian function $\log T_\gamma (w) - \lambda (w^Tw - 1)$  with respect to $w$ leads to the equation
\begin{equation}\label{eq:Lagrangian}
 \frac{S_B w}{w^TS_Bw} + (\gamma - 1) \frac{S_T w}{w^TS_Tw} - \lambda w = 0.
 \end{equation}
Left multiplication of $w^T$ leads to $\lambda = \gamma$. A critical point of the preceding equation system gives the maximum of $T_\gamma$. Since $\frac{S_B w}{w^TS_Bw} = \frac{d d^T w}{w^Td d^Tw} = \frac{1}{d^Tw}d$, one can further simplify the equation for a critical point
\begin{equation}\label{eq:ridgesolution}
w \propto (S_T + \frac{\gamma}{1-\gamma}\frac{w^TS_Tw}{w^Tw} I_p)^{-} d = (S_T + \alpha I_p)^{-} d := w^{R}_\alpha.
\end{equation}
For each $\gamma \in [0,1)$, there exists an $\alpha = \alpha(\gamma) \ge 0$ such that the continuum discriminant direction $w_\gamma$ is given by the ridge estimator $w^{R}_\alpha$. This parallels the observation made by \cite{Sundberg1993} in regression context. We allow negative $\alpha$, so that the relation to ridge estimators is extended for $\gamma>1$.

\begin{thm}\label{prop:ridgesolutionthm}
If $d$ is not orthogonal to all eigenvectors corresponding to the largest eigenvalue $\lambda_1$ of $S_T$,
then for each $\gamma>0$ there exists a number $\alpha \in (-\infty, -\lambda_1) \cup [0,\infty)$  such that $w_\gamma \propto (S_T + \alpha I)^{-} d$, including the limiting cases $\alpha \to 0, \alpha \to \pm\infty$ and $\alpha \to -\lambda_1$.
\end{thm}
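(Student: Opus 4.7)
The plan is to take the critical-point equation (\ref{eq:ridgesolution}), which the paper has already derived, and read off the possible range of $\alpha$ as a function of $\gamma$. Writing $\rho_\gamma := w_\gamma^T S_T w_\gamma / w_\gamma^T w_\gamma$, the scalar $\alpha$ associated with the continuum direction is
\begin{equation*}
\alpha \;=\; \frac{\gamma}{1-\gamma}\,\rho_\gamma, \qquad \rho_\gamma \in [0,\lambda_1].
\end{equation*}
For $\gamma \in [0,1)$ the prefactor $\gamma/(1-\gamma)$ and $\rho_\gamma$ are both nonnegative, so $\alpha \ge 0$, with $\alpha=0$ at $\gamma=0$ (the MDP case with pseudoinverse) and $\alpha \to \infty$ as $\gamma \to 1^-$. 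At $\gamma=1$ the formula $T_1(w) = w^T S_B w \propto (d^T w)^2$ is directly maximized on the unit sphere by $w_1 \propto d$, which coincides with $\lim_{\alpha\to\infty}(S_T+\alpha I)^{-1}d\propto d$; thus $\gamma=1$ is exactly the limiting case $\alpha\to\infty$ in the $[0,\infty)$ branch.

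The core of the proof is the $\gamma>1$ regime, where $\alpha=-\frac{\gamma}{\gamma-1}\rho_\gamma<0$. I would parametrize by $\alpha$ rather than $\gamma$: for $\alpha \in (-\infty,-\lambda_1)$ the matrix $S_T+\alpha I$ is invertible, so I expand $d=\sum_i d_i u_i$ in the orthonormal eigenbasis of $S_T$ and write
\begin{equation*}
w(\alpha) \;=\; (S_T+\alpha I)^{-1} d \;=\; \sum_{i} \frac{d_i}{\lambda_i+\alpha}\,u_i,
\qquad
\rho(\alpha) \;=\; \frac{\sum_i \lambda_i d_i^2/(\lambda_i+\alpha)^2}{\sum_i d_i^2/(\lambda_i+\alpha)^2}.
\end{equation*}
The associated $\gamma$-value is $\gamma(\alpha) = \alpha/(\alpha+\rho(\alpha))$. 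Two limit computations now pin down the range. As $\alpha \to -\infty$, $(\lambda_i+\alpha)^{-2} \sim \alpha^{-2}$ uniformly in $i$, so $\rho(\alpha) \to d^T S_T d/\|d\|^2 > 0$ and hence $\gamma(\alpha)\to 1^+$. As $\alpha\to -\lambda_1^-$, the standing assumption $d_1\ne 0$ forces the $i=1$ term to dominate both numerator and denominator, giving $\rho(\alpha)\to \lambda_1^-$; consequently $\alpha+\rho(\alpha)\to 0^-$ and $\gamma(\alpha)\to +\infty$. By continuity of $\gamma(\alpha)$ on $(-\infty,-\lambda_1)$ and the intermediate value theorem, the map surjects onto $(1,\infty)$, yielding existence of the required $\alpha$ for every $\gamma>1$; the boundary cases $\alpha\to -\infty$ and $\alpha\to -\lambda_1$ correspond to $\gamma\to 1^+$ and $\gamma\to\infty$ (the PC1 limit), matching the claim.

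The main obstacle is not the existence of a critical point with the stated $\alpha$, but the matching of that critical point to the \emph{global maximizer} $w_\gamma$, i.e.\ ruling out that the true maximizer for some $\gamma>1$ corresponds to $\alpha$ in the forbidden interval $(-\lambda_1,0)$. My plan is to anchor at $\gamma=1$, where $w_\gamma=d/\|d\|$ is the unique maximizer corresponding to $\alpha=-\infty$, and track $w_\gamma$ as a continuous branch in $\gamma$ (available because the unit sphere is compact and $T_\gamma$ depends smoothly on $\gamma$). Differentiating $\gamma(\alpha)$ on $(-\infty,-\lambda_1)$ using the explicit formula above to establish strict monotonicity then guarantees that this branch traces out all of $(1,\infty)$ through $\alpha \in (-\infty,-\lambda_1)$ and never jumps to a competing critical point in $(-\lambda_1,0)$. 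The subsidiary analytic step of showing that $\rho(\alpha)$ is strictly monotone (hence $\gamma(\alpha)$ is) is the most technical piece; it reduces to signing the derivative of the ratio of two positive quadratic forms in $(\lambda_i+\alpha)^{-1}$, which is routine but not one-line.
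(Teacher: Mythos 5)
Your critical-point analysis is fine as far as it goes: the Lagrangian equation, the expansion of $(S_T+\alpha I)^{-1}d$ in the eigenbasis, the limits $\gamma(\alpha)\to 1^+$ as $\alpha\to-\infty$ and $\gamma(\alpha)\to\infty$ as $\alpha\to-\lambda_1^-$ (using $d_1\neq 0$), and the intermediate value theorem do establish, for every $\gamma>1$, the \emph{existence} of a critical point of ridge form with $\alpha\in(-\infty,-\lambda_1)$. But the theorem asserts that the \emph{maximizer} $w_\gamma$ has this form, and the step you yourself flag as the main obstacle is not actually closed by your plan. Tracking ``$w_\gamma$ as a continuous branch in $\gamma$'' is not available from compactness plus smoothness of $T_\gamma$: the argmax correspondence of a smoothly parametrized family is only upper semicontinuous, and global maximizers can jump between competing critical branches as $\gamma$ varies. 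Likewise, strict monotonicity of $\gamma(\alpha)$ on $(-\infty,-\lambda_1)$ only says that branch is nicely parametrized; it says nothing about whether, for a given $\gamma$, one of the many critical points with ridge parameter in the forbidden band $(-\lambda_1,0)$ (there is one in each pole interval $(-\lambda_i,-\lambda_{i+1})$) attains a \emph{larger} value of $T_\gamma$. What is missing is a value comparison across branches, e.g.\ showing that along the curve $a\mapsto z_{-a}\propto(\Lambda+ a I)^{-}\delta$ the criterion $T_\gamma$ is monotone in $a$ on the relevant range (the kind of derivative-sign computation, via Cauchy--Schwarz on the moments $M_k(a)=\delta^T(aI-\Lambda)^{-k}\delta$, that the paper carries out in Proposition~\ref{prop:ridgesolutionexceptionthm} for the exceptional case). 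Without such an argument the proof does not rule out the forbidden interval, so there is a genuine gap.

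For comparison, the paper does not argue directly at all: it recodes the binary problem as a regression with $y=\pm1$, notes $S_T\propto X^TX$, $d=X^Ty$, and that $T_\gamma=K^2(w)V(w)^{\gamma-1}$ is increasing in $K^2$ for fixed $V$ and in $V$ for fixed $R^2$, and then invokes Proposition~2.1 of \cite{Bjorkstrom1999}, which already contains the global-maximizer-versus-forbidden-interval analysis you would need to reproduce. If you want a self-contained proof along your lines, the fix is to replace the branch-tracking argument by the explicit monotonicity of $T_\gamma((0,z_{-a}))$ in $a$ (or to cite Bj\"orkstr\"om as the paper does); note also that your uniqueness claim at $\gamma=1$ holds only up to sign and only when $d\neq 0$, which is worth stating.
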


The above theorem can be shown by an application of Proposition~2.1 of \cite{Bjorkstrom1999} who showed that, in our notation, the solution of $\max_{w} T_\gamma(w)$ is of the ridge form. See Appendix for a  proof of the theorem.

The relation between $\alpha$ and $\gamma$ is nonlinear and depends on $S_T$. A typical form of relation is plotted in Fig.~\ref{fig:IRIS}, and is explained in the following example.

\begin{example}
From Fisher's iris data, we chose `versicolor' and `virginica' as two groups each with $50$ samples. For presentational purpose, we use the first two principal component scores of the data. For a dense set of $\gamma \in [0, \infty)$, the corresponding $\alpha$ is plotted (in the left panel of Fig.~\ref{fig:IRIS}), which exhibits the typical relationship between $\gamma$ and $\alpha$.
The MDP at $\gamma = 0$ corresponds to the ridge solution with $\alpha = 0$. As $\gamma$ approaches 1, the corresponding ridge solution is obtained with $\alpha \to \pm \infty$. For $\gamma > 1$, $\alpha$ is negative and approaches $-\lambda_1$ as $\gamma \to \infty$.
 The continuum directions $\{w_\gamma:\gamma \in  [0,\infty)\}$ range from $w_{LDA}$ (which is the same as $w_{MDP}$ since $n > p$) to $w_{PCA}$ as illustrated in the right panel of Fig.~\ref{fig:IRIS}.
\begin{figure}[tb!]
\centering
\ifpdf
\vskip -2.5in
    \includegraphics[width=1\textwidth]{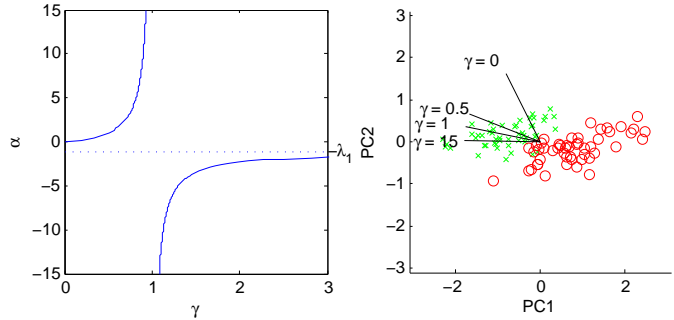}
\vskip -2.5in
\else
    \includegraphics[width=0.7\textwidth]{IRIScda.eps}
\fi
\caption{(left) Relation between $\gamma$ and $\alpha$, illustrated for the iris data. (right) Continuum directions $w_\gamma$ are overlaid on the scatter plot of the first two principal components. Different symbols represent different groups.}
\label{fig:IRIS}
\end{figure}
\end{example}

The ridge solution may not give a global maximum of $T_\gamma$ when the assumption in Theorem~\ref{prop:ridgesolutionthm} does not hold. An analytic solution for such a case is also provided in Proposition~\ref{prop:ridgesolutionexceptionthm} in Appendix.

\subsection{Continuum directions in high dimensions}\label{sec:FisherMDP-LDA}

In high-dimensional situations where the dimension $p$ of the primary data is much higher than the sample size $n$, the continuum directions are still well-defined.
We return to discuss that, if $p > n$, MDP has more preferable properties than LDA for binary classification.  The ridge solution plays an important role in the following discussion.

In the conventional case where $p \le n-2$, It is easy to see that the ridge criterion (\ref{eq:RRcriterion}) and its solution $w^{R}_{\alpha}$ (\ref{eq:ridgesolution}) bridge LDA and MD. However, if $p > n$ and thus $S_W$ is rank deficient, one extreme of the ridge criterion is connected to MDP but not to LDA.
The following proposition shows that $w^{R}_{\alpha}$ ranges from MD to MDP, giving a reason to favor MDP over LDA in high dimensions.

\begin{prop}\label{prop:RRtoMDP}
For $\alpha>0$, $w^{R}_{\alpha} \propto (S_T+\alpha I)^{-1} d$. Moreover $w^{R}_{\alpha}$ is continuous with respect to $\alpha \in (0,\infty)$. The boundaries meet MDP and MD directions, that is, $\lim_{\alpha \to 0} w^{R}_{\alpha} = w_{MDP}$ and $\lim_{\alpha \to \infty} w^{R}_{\alpha} = w_{MD}$.
\end{prop}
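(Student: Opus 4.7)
The plan is to reduce all three claims to direct manipulations with $S_T = S_W + c\,dd^T$, where $c = n_1 n_2/n^2 > 0$, combined with the spectral decomposition of $S_T$.

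First, I would establish that $w^{R}_\alpha \propto (S_T + \alpha I)^{-1} d$ for $\alpha > 0$ by applying the Sherman--Morrison formula to the rank-one update $S_T + \alpha I = (S_W + \alpha I) + c\,dd^T$, which yields
\begin{equation*}
(S_T + \alpha I)^{-1} d \;=\; \frac{(S_W + \alpha I)^{-1} d}{1 + c\, d^T (S_W + \alpha I)^{-1} d}.
\end{equation*}
Since $S_W \succeq 0$ and $\alpha > 0$, both inverses exist and the denominator is strictly positive, so $(S_T + \alpha I)^{-1}d$ and $(S_W + \alpha I)^{-1} d$ are positive scalar multiples of each other. Continuity of $w^{R}_\alpha$ on $(0,\infty)$ then follows because $\alpha \mapsto (S_T + \alpha I)^{-1} d$ is continuous and nonvanishing on this interval (as $S_T + \alpha I$ is positive definite and $d \neq 0$), and normalization to unit length preserves continuity.

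The limit as $\alpha \to \infty$ is straightforward: writing $(S_T + \alpha I)^{-1} d = \alpha^{-1}(I + S_T/\alpha)^{-1} d$ and using $(I + S_T/\alpha)^{-1} \to I$, the vector, once normalized, tends to $d/\|d\| = w_{MD}$.

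The main obstacle is the limit $\alpha \to 0$, which must accommodate the singularity of $S_T$ when $p > n$. Here I would employ a spectral decomposition $S_T = U \Lambda U^T$ with $\Lambda = \diag(\lambda_1,\ldots,\lambda_r,0,\ldots,0)$ and $r = \rank(S_T)$, which gives
\begin{equation*}
(S_T + \alpha I)^{-1} d \;=\; \sum_{i=1}^r \frac{u_i^T d}{\lambda_i + \alpha}\, u_i \;+\; \sum_{j=r+1}^p \frac{u_j^T d}{\alpha}\, u_j.
\end{equation*}
For a finite limit, the potentially divergent second sum must vanish, and the crucial observation is that $d \in \mathrm{range}(S_T)$. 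This follows from the positive semidefiniteness of $S_W$ and $S_B$: one has $\ker(S_T) = \ker(S_W) \cap \ker(S_B)$, whence $\mathrm{range}(S_T) \supseteq \mathrm{range}(S_B) \ni d$. Consequently $u_j^T d = 0$ for all $j > r$, the second sum is identically zero, and $(S_T + \alpha I)^{-1} d \to \sum_{i=1}^r \lambda_i^{-1}(u_i^T d) u_i = S_T^{-} d$, which is a nonzero multiple of $w_{MDP}$, completing the proposition.
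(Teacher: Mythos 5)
Your proof is correct and follows essentially the same route as the paper: Woodbury/Sherman--Morrison to identify $w^{R}_\alpha$ with $(S_T+\alpha I)^{-1}d$, and the spectral decomposition of $S_T$ to get continuity and the two limits. The only cosmetic difference is that you justify $d \in \mathrm{range}(S_T)$ explicitly via $\ker(S_T)=\ker(S_W)\cap\ker(S_B)$, whereas the paper handles this step by appealing to its Lemma~\ref{lem:range} before passing to the eigen-coordinates of $S_T$.
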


While $w_{MDP}$ is a limit of ridge solutions, $w_{LDA}$ does not meet with $w^{R}_{\alpha}$.
When $p > n$, $w_{MDP}$ is orthogonal to $w_{LDA}$ if the mean difference $d$ is not in the range of $S_W$, i.e., $\mbox{rank}(S_W) < \mbox{rank}(S_T)$ \citep{Ahn2010}. This fact and Proposition~\ref{prop:RRtoMDP} give $\lim_{\alpha \to 0} \mbox{angle}(w_{LDA}, w^{R}_{\alpha}) = 90^\circ$.

Algebraically, the discontinuity of the ridge direction  to $w_{LDA}$ comes from the discontinuity of the pseudoinverse. Heuristically, the discontinuity comes from the fact that $d$ does not completely lie in the column space  of $S_W$. In such a case, there is  a  direction vector $w_0$ orthogonal to the  column space of $S_W$ containing information about $d$ (i.e., $d^T w_0 \neq 0$). Using $S_W^-$ in LDA ignores such information. On the other hand, MDP uses $S_T^-$, which preserves all information contained in the special direction $w_0$.

The values of Fisher's criterion for various choices of $w$ in Fig.~\ref{fig:FisherscriterionNeilHayes} exemplify that $w_{MDP}$ should be used as Fisher discriminant direction rather than $w_{LDA}$ in high dimensions. In our experiments on classification (in Sections~\ref{sec:sim} and \ref{sec:realdata}), we check that the empirical performance of LDA is among the worst.

\begin{figure}[tb!]
\centering
\ifpdf
    \includegraphics[width=1\textwidth]{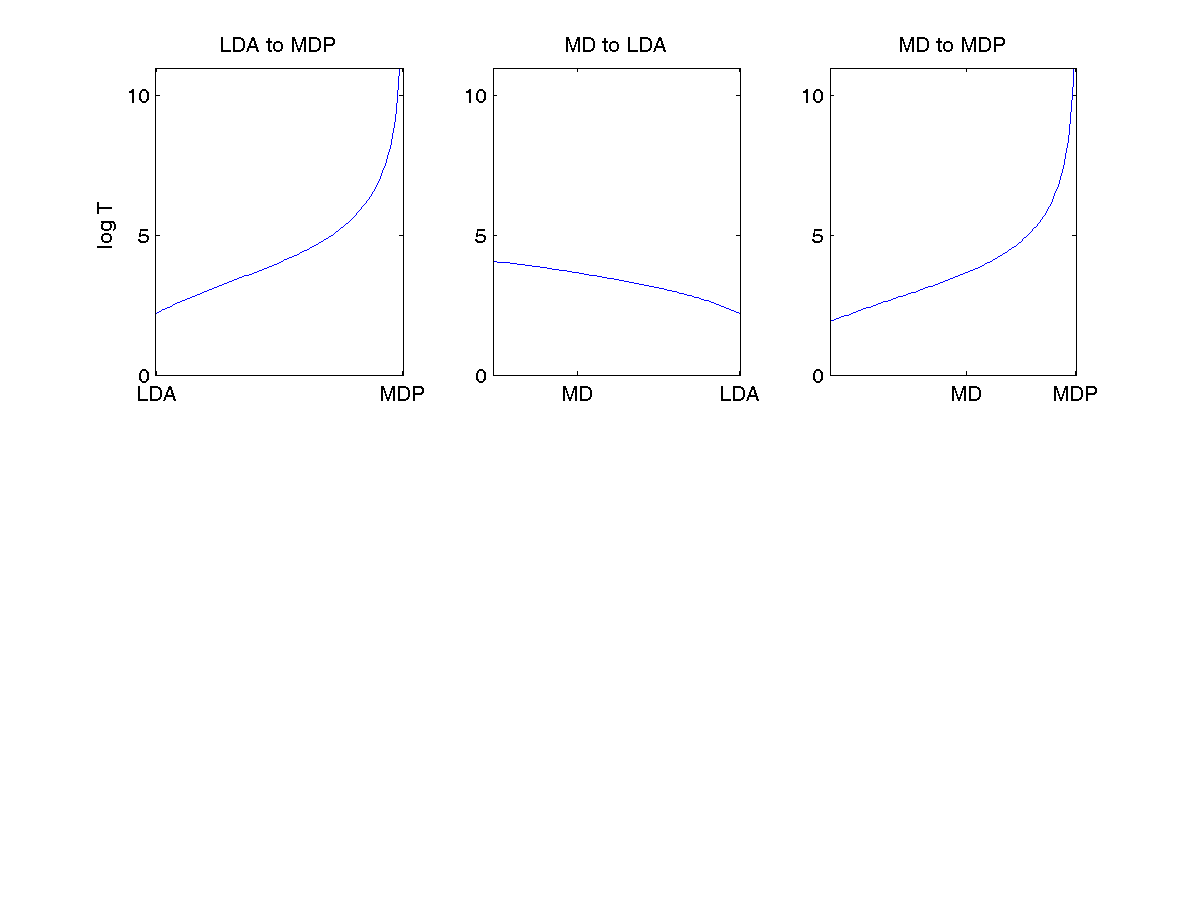}
\vskip -2in
\else
    \includegraphics[width=0.7\textwidth]{FisherscriterionNeilHayes.ps}
\fi
\caption{Fisher's $T(w)$ for directions discriminating two groups ($n_1=20, n_2 = 17$) in a microarray dataset with $p = 2530$ \citep{Bhattacharjee2001}. The three horizontal axes represent discriminant direction $w$ along the edges of the triangle formed by $w_{LDA},w_{MD}$, and $w_{MDP}$. LDA is not maximizing Fisher's criterion and is inferior to the mean difference, while $T(w_{MDP}) = \infty$.}
\label{fig:FisherscriterionNeilHayes}
\end{figure}

Our discussion so far assumes that the covariance matrices $S_T, S_W, S_B$ are the sample covariance matrices. It is well-known that these matrices are inconsistent estimators of the population covariance matrices when $p \gg n$, as $n \to \infty$.  Only with strong assumptions on the covariance and mean difference (such as sparsity), it is possible to devise consistent estimators.
%
%
In such situation, the sufficient statistics $S_T$ and $S_B$ can be replaced by consistent estimators $\widehat{\Sigma}_T$ and $\widehat{\Sigma}_B$, in the evaluation of the continuum directions (\ref{eq:CDA}). This approach has a potential to provide an estimator of $w_\gamma$, consistent with a suitably defined population continuum directions, when $p/n \to \infty$.
In the next section, we present a high-dimensional asymptotic study when $S_T$ and $S_B$ are used in computing the empirical continuum directions.

\subsection{HDLSS asymptotic study of continuum directions}\label{sec:high-d}

%
We employ the high-dimension, low-sample-size (HDLSS) asymptotics, that is, the asymptotic study of $p\to\infty$ while the sample size $n$ is held fixed, to understand the high-dimensional behaviors of the true and sample continuum directions. The HDLSS asymptotics has been successfully used in revealing the properties of conventional multivariate methods in high dimensions, such as classification \citep{Hall2005,Qiao2009}, PCA \citep{Jung2009a,yata2009a,zhou2015high}, and clustering \citep{Ahn2012}, to name a few. For a review of recent developments, see \cite{AOSHIMA2017}.

To set up, suppose that $x_{11},\ldots, x_{1n_1}$ are i.i.d. $N_p(\mu_1,\Sigma_1)$ and $x_{21},\ldots, x_{2n_2}$ are i.i.d. $N_p(\mu_2,\Sigma_2)$.
The empirical continuum directions $w_\gamma$ are given by (\ref{eq:CDAcriterion}) where $S_B$ and $S_T$ as defined in Section~\ref{sec:CDAbinary}.
By Theorem~\ref{prop:ridgesolutionthm}, the elements in the set of true continuum directions $\{w_\gamma : \gamma >0\}$ can also be parametrized by
\begin{equation}\label{eq:alpha-st}
\alpha(\gamma,S_T) = \frac{\gamma }{1-\gamma}\frac{ \omega_\gamma^T S_T \omega_\gamma }{\omega_\gamma^T  \omega_\gamma},
\end{equation}
which leads to
 $w_\gamma \propto (S_T + \alpha(\gamma,S_T)I_p)^{-1} d$.
For each fixed $\gamma$, if the dimension $p$ of $S_T$ increases, then the total variance of $S_T$ also increases, which in turn leads that $\alpha(\gamma,S_T)$ in (\ref{eq:alpha-st}) be increasing. To lessen the technical difficulty in the exposition for this section, we use the ridge parameterization by $\alpha$ for the continuum directions.
In particular, we parameterize the continuum directions by $\alpha_p := \alpha  p $, which is an increasing function of the dimension $p$. For each $p$, we consider the set of sample continuum directions, denoted by $\hat{w}_\alpha \propto  (S_T + \alpha_p I_p)^{-1} d$, for $\alpha \neq 0$.

The population counterpart of the sample continuum directions is defined similarly. For $\mu = \mu_1 - \mu_2$,  $\Sigma_B = \mu \mu^T$, $\Sigma_W = (\Sigma_1 + \Sigma_2)/2$, and $\Sigma_T = \Sigma + \Sigma_B$, the population continuum directions are parameterized by $\alpha$, and are denoted by
 $\omega_\alpha \propto (\Sigma_T + \alpha_p I_p)^{-1} \mu$.
%
Assume the following:
\begin{enumerate}
\item[C1.] There exists a constant $\delta^2\ge 0$ such that $p^{-1}\|\mu\|^2 \to \delta^2$ as $p\to\infty$.
\item[C2.] $p^{-1}\mbox{tr}(\Sigma_1) \to \sigma_1^2$, $p^{-1}\mbox{tr}(\Sigma_2) \to \sigma_2^2$ as $p \to \infty$.
\item[C3.] The eigenvalues of $\Sigma_1$ (and $\Sigma_2$) are sufficiently concentrated, in the sense that
    $ [\mbox{tr}(\Sigma_i^2)]^2 / [\mbox{tr}(\Sigma_i)]^2 \to 0 $ as $p\to\infty$, for $i = 1,2$.
\end{enumerate}

The condition C1 has also appeared in, e.g., \cite{Hall2005,Qiao2009,Ahn2012}, and requires that the true mean difference grows as the dimension increases.
The conditions C2 and C3 include the covariance matrix models for both independent variables and mildly-spiked cases (i.e., few eigenvalues are moderately larger than the others), and were first appeared in \cite{Ahn2007}.
These conditions can be  generalized and the Gaussian assumption can be relaxed, as done in, e.g., \cite{Jung2009a,Jung2012}, to produce the equivalent results shown below. We keep it simple for brevity.

The asymptotic behavior of the sample continuum directions $\hat{w}_\alpha$, when $p\to\infty$, is investigated in two ways. We first show that $\hat{w}_\alpha$ is inconsistent, and has a non-negligible constant angular bias when compared to its population counterpart $\omega_\alpha$. Despite the bias, the CDA, the classification rule discussed in Section~\ref{sec:classification}, can perfectly classify new observations under certain conditions.

\begin{thm}
\label{thm:simple}
Under the setting in this section, including the conditions C1---C3, the following holds.

(i) The sample continuum directions are inconsistent with its population counterparts. In particular, for any $\alpha\neq 0$,
$$\mbox{Angle}(\omega_\alpha, \hat{w}_\alpha) \to \cos^{-1}\left(\frac{\delta^2}{\delta^2 + \sigma_1^2/n_1 + \sigma_2^2/n_2} \right)^{1/2},$$ in probability as $p \to\infty$.

(ii) The probability that CDA classifies a new observation correctly tends to 1 as $p \to\infty$ if $\delta^2 > \left|\sigma_1^2/n_1 - \sigma_2^2/n_2 \right|$.
\end{thm}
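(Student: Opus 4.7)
The plan is to reduce both the population and sample continuum directions, via Sherman--Morrison, to expressions involving only within-group quantities. Writing $\Sigma_T=\Sigma_W+\mu\mu^{T}$ and $S_T=S_W+\tfrac{n_{1}n_{2}}{n^{2}}dd^{T}$ as rank-one updates gives $\omega_\alpha\propto(\Sigma_W+\alpha_p I)^{-1}\mu$ and $\hat w_\alpha\propto(S_W+\alpha_p I)^{-1}d$. For $\omega_\alpha$, condition C3 forces $\|\Sigma_W\|_{op}\le\sqrt{\tr(\Sigma_W^{2})}=o(p)$, so a Neumann expansion yields $(\Sigma_W+\alpha pI)^{-1}\mu-\mu/(\alpha p)$ of norm $O(\|\Sigma_W\|_{op}\|\mu\|/(\alpha p)^{2})=o(p^{-1/2})$, of smaller order than $\|\mu\|/(\alpha p)=\Theta(p^{-1/2})$. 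Hence $\omega_\alpha/\|\omega_\alpha\|\to\mu/\|\mu\|$ deterministically.

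For $\hat w_\alpha$ the Neumann trick fails because $\|S_W\|_{op}=\Theta_P(p)$. Instead I exploit that $S_W$ has rank $\le n-2$, with range equal to the column space of the within-centered data matrix $X_W=[x_{ki}-\bar x_k]$, so on its orthogonal complement $(S_W+\alpha pI)^{-1}$ acts exactly as $(\alpha p)^{-1}I$. Writing $d=\mu+(\bar Z_{1}-\bar Z_{2})$ with $Z_{ki}=x_{ki}-\mu_k$ and invoking the HDLSS orthogonality consequences of C1--C3 (standard Chebyshev bounds such as $\mathrm{Var}(Z_{ki}^{T}Z_{lj}/p)\le\tr(\Sigma_k\Sigma_l)/p^{2}\to 0$), the projection $P_W=X_W(X_W^{T}X_W)^{-}X_W^{T}$ obeys $\|P_W\mu\|^{2}=O_{P}(1)$ and $\|P_W(\bar Z_1-\bar Z_2)\|^{2}=O_{P}(1)$, both obtained by pairing $\|X_W^{T}(\cdot)\|^{2}=O_{P}(p)$ with $\|(X_W^{T}X_W)^{-}\|_{op}=O_{P}(p^{-1})$. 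Splitting $d=d_\parallel+d_\perp$ with $d_\parallel=P_W d$, one gets $\|d_\parallel\|=O_{P}(1)$ while $\|d_\perp\|^{2}/p\to\delta^{2}+\sigma_1^{2}/n_1+\sigma_2^{2}/n_2$ in probability; since $(S_W+\alpha pI)^{-1}d_\parallel$ has norm $O_P(p^{-1})$, negligible against $(\alpha p)^{-1}\|d_\perp\|=\Theta_P(p^{-1/2})$, we obtain $\hat w_\alpha/\|\hat w_\alpha\|=d/\|d\|+o_P(1)$. Part (i) then follows from the direct computation $\mu^{T}d/(\|\mu\|\|d\|)=\|\mu\|/\|d\|\to\sqrt{\delta^{2}/(\delta^{2}+\sigma_1^{2}/n_1+\sigma_2^{2}/n_2)}$.

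\textbf{Part (ii).} For binary supervision CDA uses $\kappa=K-1=1$ direction, and 1-D LDA with equal priors is the nearest-projected-mean rule. By (i), I may replace $\hat w_\alpha$ with the surrogate $\tilde w=d/\|d\|$ without changing the asymptotic decision. Then $\bar z_1-\bar z_2=d^{T}\tilde w=\|d\|$, and for a new observation $x_\ast=\mu_k+Z_\ast$ independent of training, the same HDLSS limits give $(x_\ast-\bar x_k)^{T}d/p\to-\sigma_k^{2}/n_k$ (cross-terms involving the fresh $Z_\ast$ vanish; only $-\|\bar Z_k\|^{2}/p\to-\sigma_k^{2}/n_k$ survives). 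Dividing by $\|d\|\sim\sqrt{p(\delta^{2}+\sigma_1^{2}/n_1+\sigma_2^{2}/n_2)}$, a group-$1$ observation is correctly classified in the limit iff $\sigma_1^{2}/n_1<\tfrac12(\delta^{2}+\sigma_1^{2}/n_1+\sigma_2^{2}/n_2)$, equivalently $\sigma_1^{2}/n_1-\sigma_2^{2}/n_2<\delta^{2}$; the symmetric inequality for group $2$ combines to the stated $|\sigma_1^{2}/n_1-\sigma_2^{2}/n_2|<\delta^{2}$.

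The main technical obstacle is the bookkeeping of HDLSS inner-product limits---separating deterministic mean contributions from sample-mean noise, handling centering by $\hat\mu$ rather than $\mu$, and controlling $\|P_W(\cdot)\|^{2}$ at the correct $O_{P}(1)$ order rather than the naive $O_{P}(p)$. Each individual limit is routine given that C3 is calibrated to make $\tr(\Sigma_i\Sigma_j)=o(p^{2})$ bounds work, but the $O(n^{2})$ cross-term limits required make the computation lengthy rather than conceptually deep.
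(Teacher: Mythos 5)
Your argument is, in substance, the paper's own: the Woodbury/Neumann reduction of $\omega_\alpha$ to the direction of $\mu$ is the paper's Lemma~\ref{thm:true}; your split $d=P_Wd+(I-P_W)d$ along the range of $S_W$ and the observation that $(S_W+\alpha_pI)^{-1}$ acts as $(\alpha_p)^{-1}I$ on the orthocomplement is exactly the paper's $\widehat{U}_1/\widehat{U}_2$ decomposition in (\ref{eq:decomposition}) together with Lemma~\ref{lem:previous_result}(v) and Lemma~\ref{lem:new_lem}; the limit of $\cos[\mbox{Angle}(d,\mu)]$ is Lemma~\ref{lem:previous_result}(ii) (cited from Qiao et al.), and part (ii) is obtained in the paper by citing the centroid-rule result of Hall et al. (Lemma~\ref{lem:previous_result}(iii)), whose computation you simply re-derive directly. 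So the route is the same; the only difference is that you prove the component limits from scratch rather than by citation, which makes the write-up more self-contained.

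One quantitative point needs fixing, though it does not break the proof. The intermediate rates $\|X_W^T\mu\|^2=O_P(p)$ and $\|P_W\mu\|^2=O_P(1)$ do not follow from C1--C3: $\mathrm{Var}(Z_{ki}^T\mu)=\mu^T\Sigma_k\mu$ can be of order $\lambda_1(\Sigma_k)\|\mu\|^2\gg p$ under a mild spike allowed by C3 (e.g.\ $\lambda_1(\Sigma_k)\asymp p^{3/4}$ with the spike eigenvector aligned with $\mu$), in which case $\|P_W\mu\|^2$ diverges. What C1--C3 do give is $\lambda_1(\Sigma_k)=o(p)$ and $\mathrm{tr}(\Sigma_k\Sigma_l)=o(p^2)$, hence each entry of $X_W^Td$ is $o_P(p)$, $\|X_W^Td\|^2=o_P(p^2)$, and with $\|(X_W^TX_W)^{-}\|_{op}=O_P(p^{-1})$ (from Lemma~\ref{lem:previous_result}(iv)) one gets $\|P_Wd\|=o_P(\sqrt{p})=o_P(\|d\|)$. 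That weaker bound is all your argument actually uses, since then $\|(S_W+\alpha_pI)^{-1}P_Wd\|=o_P(p^{-1/2})$ remains negligible against $(\alpha p)^{-1}\|(I-P_W)d\|=\Theta_P(p^{-1/2})$; so restate the rates and the proof stands. A last small gloss (shared with the paper): in (ii) the LDA threshold on the one-dimensional scores is the midpoint only up to a prior/variance correction of order $o_P(\sqrt{p})$, which should be noted to be negligible against the $\Theta_P(\sqrt{p})$ margins that decide the sign.
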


Both results in Theorem~\ref{thm:simple} depend on the quantity $\delta^2$ in the condition C1, which may be interpreted as a signal strength. When $\delta^2$ is large, the sample continuum direction is less biased, and $\mbox{Angle}(\omega_\alpha, \hat{w}_\alpha)$ is small. On the other hand, if $\delta = 0$,  then $\hat{w}_\alpha$ is strongly inconsistent with $\omega_\alpha$, and $\hat{w}_\alpha$ is asymptotically orthogonal to $\omega_\alpha$. The performance of CDA also depends on $\delta^2$. Consider the case where $\sigma_1 = \sigma_2$ and $n_1 = n_2$. Then CDA classification is perfect whenever $\delta^2$ is positive. On the other hand, if $\delta = 0$, then the classification is only as good as random guess. These observations are consistent with  \cite{Hall2005} and \cite{Qiao2009}, in which HDLSS asymptotic behaviors of the centroid rule, SVM and DWD are studied.

We conjecture that if the within-covariance matrix $\Sigma_W$ has a large first eigenvalue (that is, a large variance of the first principal component), then the sample continuum direction is less biased than in Theorem~\ref{thm:simple},  even under smaller size of signal $\delta^2$. This conjecture seems to be true, as shown in the simulation studies in Section~\ref{sec:sim}, but rigorously proving this conjecture has been challenging.

\section{Computations}\label{sec:computation}

\subsection{Numerical algorithm for the binary supervision case}\label{sec:computation-binary}
When $S_B$ is of rank 1, or when the supervision is binary, Theorem~\ref{prop:ridgesolutionthm} can be used to compute a discrete sequence of the first continuum directions.
 In particular, there is a corresponding $\gamma$ for each ridge parameter $\alpha \in (-\infty, -\lambda_1] \cup [0,\infty)$.
 Let $M > 0$ be a maximum value for evaluating $\alpha$. In our experience it is sufficient to choose $M = 10\lambda_1$, ten times larger than the largest eigenvalue of $S_T$.
Define $\alpha_{(k)} = \frac{k}{K}M$  and $\alpha^{(k)} = -(1+\epsilon)\lambda_1-\frac{K-k}{K}M$ for $k = 0, \ldots, K$ for some number $K$. The small number $\epsilon>0$ keeps the matrix $S_T+ \alpha^{(k)}I_p$ invertible and was chosen to $0.01$ for numerical stability. For each $\alpha = \alpha_{(k)}$ or $\alpha^{(k)}$, we get $w_{\gamma(\alpha)} = (S_T + \alpha I_p)^{-1} d$, where
$d$ satisfies $S_B = d d^T$ and
$$\gamma(\alpha) = \frac{\alpha}{w_{\gamma(\alpha)}^T S_T w_{\gamma(\alpha)} + \alpha}.$$
The sequence $\{w_{\gamma(\alpha)}: \alpha = \alpha_{(k)}, \alpha^{(k)}, k = 0,\ldots, K\}$ is augmented by the two extremes $w_{MD} ( \propto d) $ and $w_{PCA}$.

If $d$ is orthogonal to all eigenvectors corresponding to $\lambda_1$, then $\gamma$ does not tend to infinity even though $\alpha$ has reached  $-\lambda_1$. In such a case, the remaining sequence of directions is analytically computed using Proposition~\ref{prop:ridgesolutionexceptionthm} in Appendix.

%
%
%
%

\subsection{Numerical algorithm for the general case}\label{sec:computing-general}

In general cases where $\mbox{rank}(S_B) >1$, the connection to generalized ridge solutions in Theorem~\ref{prop:ridgesolutionthm} does not hold. Even with binary supervision, when a sequence of continuum directions $\{w_{(1)},\ldots, w_{(\kappa)}\}$ is desirable, the ridge parameter $\alpha(\gamma)$ is different for different $k$ in $w_{(k)}$, even when $\gamma$ is held fixed. Here, we propose a gradient descent algorithm to sequentially solve (\ref{eq:CDA}) for a given $\gamma$.

We first discuss a gradient descent algorithm for $w_{(1)}$. Since the only constraint is that the vector $w$ is of unit size, the unit sphere $S^{p-1} = \{w\in\Real^p : w^Tw = 1\}$ is the feasible space. To make the iterate confined in the feasible space we  update a candidate $w_0$ with $w_1 = (w_0 + c\nabla_{w_0})/\norm{w_0 + c\nabla_{w_0}}$, for a step size $c>0$, where the gradient vector is
$\nabla_{w} = \frac{S_B w}{w^TS_B w} + (\gamma - 1)  \frac{S_T w}{w^T S_T w}$.
To expedite convergence, $c$ is initially chosen to be large so that $w_1 \approx \nabla_{w_0}/ \norm{\nabla_{w_0}}$. If this choice of $c$ overshoots, i.e., $T_\gamma(w_1) < T_\gamma(w_0)$, then we immediately reduce $c$ to unity, so that the convergence to maximum is guaranteed, sacrificing fast rate of convergence. The iteration is stopped if $1-|w_1^Tw_0| < \eps$ or $|T_\gamma(w_1) - T_\gamma(w_0)| <\eps$ for a needed precision $\eps>0$. The step size $c$ can be reduced if needed, but setting $c \ge 1$ has ensured convergence with a precision level $\eps = 10^{-10}$ in our experience.

For the second and subsequent directions, suppose we have $w_{(1)},\ldots,w_{(k)}$ and are in search for  $w_{(k+1)}$.
The $S_T$-orthogonality and the unit size condition lead to the feasible space
$\mathcal{S} = \{w \in S^{p-1}: w^T S_T w_{(\ell)} =0, \ell = 1,\ldots, k\}$.
Since any $w \in \mathcal{S}$ is orthogonal to $z_{(\ell)} := S_T w_{(\ell)}$, $\ell = 1,\ldots, k$, the solution lies in the nullspace of $Z_k = [z_{(1)}, \ldots, z_{(k)}]$.
We use orthogonal projection matrix $P_k = I - Z_k (Z_k^TZ_k)^{-1}Z_k$ to project the variance-covariance matrices $S_T$ and $S_B$ onto the nullspace of $Z_k$, and obtain $S_T^{(k)} = P_k S_T P_k$ and $S_B^{(k)} = P_k S_B P_k$. The gradient descent algorithm discussed above for $w_{(1)}$ is now applied with $S_B^{(k)}$ and $S_T^{(k)}$ to update candidates of $w_{(k+1)}$, without the $S_T$-orthogonality constraint.

The following lemma justifies this iterative algorithm converges to the solution $w_{(k+1)}$.

\begin{lem}\label{lem:optimization}
\begin{enumerate}
\item[(i)] Let $x_i^* = P_k x_i$ be the projection of $x_i$ onto the nullspace of $Z_k$. Write $X^* = [x_1^*,\ldots,x_n^*]$. Then $S_T^{(k)} = n^{-1}X^*(X^*)^T$ and $S_B^{(k)} = n^{-1} (X^*Y^T)(X^*Y^T)^T$.
\item[(ii)] For  $w \in \mathcal{S}$, $T_\gamma(w)  = (w^T S_B^{(k)} w) (w^T S_T^{(k)} w)^{\gamma - 1} := T^{(k)}_\gamma(w) $.
\item[(iii)] The solution $w_{(k+1)}$ of the unconstrained optimization problem $max_w T_\gamma^{(k)}$ satisfies $w_{(k+1)}^T S_T w_\ell = 0 $ for $\ell = 1,\ldots,k$.
\end{enumerate}
\end{lem}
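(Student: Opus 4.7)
The plan is to verify each part by straightforward linear-algebra manipulations, exploiting the fact that the $S_T$-orthogonality constraint defining $\mathcal{S}$ is equivalent to membership in the range of the orthogonal projector $P_k$. This identification makes parts (i) and (ii) essentially bookkeeping, and reduces part (iii) to a short Lagrange-multiplier argument.

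For part (i), since $P_k$ acts as a linear map, writing $X^* = P_k X$ column-wise gives $n^{-1} X^*(X^*)^T = n^{-1} P_k X X^T P_k^T = P_k S_T P_k = S_T^{(k)}$ using $P_k = P_k^T$, and exactly the same manipulation with $Y^T$ inserted produces $n^{-1}(X^* Y^T)(X^* Y^T)^T = P_k S_B P_k = S_B^{(k)}$. For part (ii), I would first rewrite the constraints $w^T S_T w_{(\ell)} = 0$ as $w^T z_{(\ell)} = 0$, so that $\mathcal{S}$ consists of unit vectors in $\mbox{null}(Z_k^T) = \mbox{range}(P_k)$. Any such $w$ satisfies $P_k w = w$, and hence
\[
w^T S_T^{(k)} w = (P_k w)^T S_T (P_k w) = w^T S_T w,
\]
with the same identity for $S_B^{(k)}$. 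Substituting into the definition of $T_\gamma^{(k)}$ yields $T_\gamma^{(k)}(w) = T_\gamma(w)$.

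Part (iii) is the only piece that requires a genuine argument, and I see it as the main obstacle. I would apply Lagrange multipliers to $\max_w T_\gamma^{(k)}(w)$ subject to $w^T w = 1$: differentiating $\log T_\gamma^{(k)}(w) - \lambda(w^T w - 1)$ gives the critical-point equation
\[
\frac{S_B^{(k)} w}{w^T S_B^{(k)} w} + (\gamma - 1)\frac{S_T^{(k)} w}{w^T S_T^{(k)} w} = \lambda w,
\]
and left-multiplication by $w^T$ forces $\lambda = \gamma$ exactly as in the derivation of (\ref{eq:Lagrangian}). Now both $S_T^{(k)} = P_k S_T P_k$ and $S_B^{(k)} = P_k S_B P_k$ have columns in $\mbox{range}(P_k)$, so the left-hand side above lies in $\mbox{range}(P_k)$. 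Hence $\gamma w \in \mbox{range}(P_k)$; for $\gamma > 0$ this forces $P_k w = w$, i.e.\ $w \in \mathcal{S}$, which is exactly the claimed $w_{(k+1)}^T S_T w_{(\ell)} = 0$ for $\ell = 1,\ldots,k$. The edge case $\gamma = 0$ is not covered by this Lagrange argument and would be handled separately by continuity: since $w_{(k+1)}(\gamma)$ depends continuously on $\gamma$ (as discussed around Proposition~\ref{prop:RRtoMDP}), the $S_T$-orthogonality persists in the limit $\gamma \to 0^+$.
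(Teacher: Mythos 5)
Your parts (i) and (ii) are exactly the paper's argument (write $X^*=P_kX$ and use $P_kw=w$ for $w\in\mathcal{S}$), but for part (iii) you take a genuinely different route. The paper proves (iii) by invoking Lemma~\ref{lem:range} applied to the reduced matrices $S_T^{(k)}=P_kS_TP_k$ and $S_B^{(k)}=P_kS_BP_k$: a direct comparison argument (project any candidate onto the column space of $S_T^{(k)}$ and renormalize, which can only increase the criterion) shows every maximizer lies in that column space, hence in $\mathrm{range}(P_k)$, giving the $S_T$-orthogonality. You instead use the first-order conditions: stationarity of $\log T_\gamma^{(k)}$ under $w^Tw=1$ forces $\lambda=\gamma$ and expresses $\gamma w$ as a combination of $S_B^{(k)}w$ and $S_T^{(k)}w$, both of which lie in $\mathrm{range}(P_k)$, so $P_kw=w$ when $\gamma>0$. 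This is correct for $\gamma>0$ once you note the implicit hypothesis that the maximizer has $w^TS_B^{(k)}w>0$ (so the logarithm and the divisions are legitimate; this also gives $w^TS_T^{(k)}w>0$ since $w^TS_Bw\le c\,w^TS_Tw$ with $c$ the largest eigenvalue of $n^{-1}Y^TY$, and the same bound holds for the reduced matrices). Your route even buys something slightly stronger than the paper's: every stationary point with positive criterion value is $S_T$-orthogonal to $w_{(1)},\ldots,w_{(k)}$, which is pertinent because the gradient scheme of Section~\ref{sec:computing-general} is only guaranteed to reach a local maximum, whereas Lemma~\ref{lem:range} speaks only of global maximizers. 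The trade-off is the endpoint $\gamma=0$: your continuity patch is not really available, since continuity of the later directions $w_{(k+1)}$ in $\gamma$ is nowhere established (Proposition~\ref{prop:RRtoMDP} concerns only the first direction in the binary case as a function of the ridge parameter), and at $\gamma=0$ the criterion $T_0^{(k)}$ is unaffected by adding components in the nullspace of $P_k$, so a maximizer need not satisfy the orthogonality without an explicit selection; since the algorithm is run with $\gamma>0$, this caveat is minor, but you should either restrict the claim to $\gamma>0$ or adopt the paper's comparison argument for the boundary case.
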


It can be seen from Lemma~\ref{lem:optimization}  that the optimization is performed with the part of data that is $S_T$-orthogonal to $Z_k$. While making the optimization simpler, we do not lose generality because the original criterion $T_\gamma$ has the same value as $T_\gamma^{(k)}$ for candidate $w$ in the feasible region (Lemma~\ref{lem:optimization}(ii)). This with the last result (iii) shows that our optimization procedure leads to (at least) a local maximum in the feasible region.

Note that the sequence $\{w_{(1)},\ldots, w_{(\kappa)}\}$ depends on the choice of $\gamma$. To obtain a spectrum of continuum directions, one needs to repeat the iterative algorithm for several choices of $\gamma>0$.

\subsection{Efficient computation when $p \gg n $}

For large $p$, directly working with $p\times p$ matrices $S_T$ and $S_B$ needs to be avoided. For such cases, utilizing the eigendecomposition of $S_T$ (or, equivalently, the singular value decomposition of $X$) provides efficient and fast computation for continuum directions.
Write $S_T = U\Lambda U^T$, where $U = [u_1,\ldots,u_m]$ spans the column space of $S_T$, for $m = \min(n-1,p)$.
Then the algorithms discussed in the previous sections can be applied to $\tilde{S}_T = U^T S_T U = \Lambda$ and $\tilde{S}_B = U^TS_B U$, in place of $S_T$ and $S_B$, to obtain $\tilde{w}_{(\ell)} \in \Real^m$.
The continuum directions are then $w_{(\ell)} = U\tilde{w}_{(\ell)}$.
If $m \ll p$, this requires much less computing time than working with $S_T$ and $S_B$ directly. The next lemma ensures that our solution is the maximizer of the criterion (\ref{eq:CDA}).

\begin{lem} \label{lem:range}
Any maximizer $w$ of (\ref{eq:CDA}) lies in  the column space of $S_T$.
\end{lem}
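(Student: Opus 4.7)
The plan is to use an orthogonal decomposition of any candidate $w$ relative to the column space of $S_T$, and show that the component outside that subspace only consumes unit-norm budget without contributing to the criterion, so removing it and renormalizing strictly improves $T_\gamma$.

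First I would verify the key structural fact that $\mathrm{col}(S_B) \subseteq \mathrm{col}(S_T)$. Since $S_T = n^{-1} X X^T$, we have $\mathrm{col}(S_T) = \mathrm{col}(X)$; and since $S_B = n^{-1}(X Y^T)(X Y^T)^T$, its column space sits inside $\mathrm{col}(X Y^T) \subseteq \mathrm{col}(X) = \mathrm{col}(S_T)$. Consequently, any vector $v$ in the orthogonal complement $\mathrm{col}(S_T)^\perp = \mathrm{null}(S_T)$ satisfies both $S_T v = 0$ and $S_B v = 0$.

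Next I would decompose $w = u + v$ with $u \in \mathrm{col}(S_T)$ and $v \in \mathrm{col}(S_T)^\perp$. By the previous step the quadratic forms collapse to
\[
w^T S_B w = u^T S_B u, \qquad w^T S_T w = u^T S_T u, \qquad w^T S_T w_{(\ell)} = u^T S_T w_{(\ell)},
\]
so the $S_T$-orthogonality constraints become constraints on $u$ alone, and $\|w\|^2 = \|u\|^2 + \|v\|^2 = 1$ forces $\|u\| \le 1$. Assume $u \neq 0$ (the case $u=0$ forces both quadratic forms to vanish, making $w$ a degenerate non-maximizer whenever $T_\gamma$ takes any positive value on the feasible set, which it does generically). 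Then $w' := u/\|u\|$ is a unit vector in $\mathrm{col}(S_T)$ that still satisfies the orthogonality constraints, and a direct calculation gives
\[
T_\gamma(w') = \frac{u^T S_B u}{\|u\|^2}\left(\frac{u^T S_T u}{\|u\|^2}\right)^{\gamma-1} = \|u\|^{-2\gamma}\, T_\gamma(w).
\]

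Finally, for $\gamma > 0$, the factor $\|u\|^{-2\gamma} \ge 1$ with equality if and only if $\|u\| = 1$, i.e., $v = 0$. Hence if $v \neq 0$, $w'$ strictly dominates $w$, contradicting the maximality of $w$. Therefore every maximizer must have $v = 0$, i.e., $w \in \mathrm{col}(S_T)$. I do not expect a serious obstacle here; the only subtlety is the edge case $u=0$ (which is vacuous whenever $T_\gamma$ is positive somewhere on the feasible set) and the boundary $\gamma = 0$, where the criterion is scale-invariant in $w$ and a maximizer in $\mathrm{col}(S_T)$ can always be produced by projection even if not uniquely characterized.
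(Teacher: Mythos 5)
Your argument is correct and is essentially the paper's own proof: both decompose $w$ via orthogonal projection onto the column space of $S_T$, observe that the quadratic forms $w^TS_Bw$ and $w^TS_Tw$ (and the $S_T$-orthogonality constraints) are unchanged by dropping the orthogonal component, and then note that renormalizing multiplies $T_\gamma$ by $\|w_P\|^{-2\gamma}\ge 1$, so a maximizer must already lie in $\mathrm{col}(S_T)$. Your explicit verification that $\mathrm{col}(S_B)\subseteq\mathrm{col}(S_T)$ and your handling of the $u=0$ and $\gamma=0$ edge cases are minor additions to the same route, not a different one.
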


In the case of binary supervision, one needs to avoid the inversion of large $p\times p $ matrix $S_T +\alpha I_p$. The continuum  directions are obtained via only involving the inversion of $m \times m$ matrices: $(S_T +\alpha I_p)^{-} d =  U (\Lambda + \alpha I_p)^{-1} U^T d.$
In all of our experiments, involving moderately large data sets, where $\max(p,n)$ is tens of thousands and $\min(p,n)$ is hundreds, the computation takes only a few seconds at most, compared to several minutes needed for the method of \cite{Clemmensen:2011}.

\section{Simulation studies} \label{sec:sim}
We present two simulation studies to empirically reveal the underlying model under which the continuum directions are useful. We  numerically compare the performance of CDA, the linear classification followed by continuum dimension reduction, with several other classification methods, in binary or multi-category classification.

\subsection{Binary classification}\label{sec:sim_binary}

For binary classification, our method is compared with LDA (using the pseudoinverse),
the features annealed independence rule (FAIR) by \cite{Fan2008}, the distance weighted discrimination (DWD) by \cite{Marron2007} and the sparse discriminant analysis (SDA) by \cite{Clemmensen:2011}.

The setup for the simulation study is as follows. We assume two groups with mean $\mu_1 = 0$ and $\mu_2 = c_0 (1_{s},0_{p-s})^T$ for some constant $c_0$, where $1_{s}$ is the vector $(1,\ldots,1)^T$ of length $s$,and $0_{p-s} = (0,\ldots,0)^{T}$. We choose $s = 10$ or $p/2$, to examine both sparse and non-sparse models. The common covariance matrix is $\Sigma_{\rho} = (1-\rho) I_p + \rho 1_p 1_p^T$ for $\rho \in \{0,0.1,0.25,0.5\}$. This so-called compound symmetry model allows examination from independent to highly correlated settings. The scalar $c_0 = 3 ({1_{s}^T \Sigma_{\rho}^{-1} 1_{s}})^{-1/2}$ varies for different $(p,\rho)$ to keep the Mahalanobis distance between $\mu_1$ and $\mu_2$ equal to 3.

Training and testing data of size $n_1 = n_2 = 50$ are generated from normal distribution of dimension $p = 200,400$ and $800$. The parameter $\gamma$ of CDA is chosen by the 10-fold cross-validation. The number of features for FAIR, as well as the tuning parameters for SDA, were also chosen by 10-fold cross-validation. The mean and standard deviation of the  misclassification rates, based on 100 replications, are listed in Table~\ref{tab:sim}.

\begin{table}[tbp]
\centering
\begin{tabular}{ccccccc}
\multicolumn{7}{c}{Sparse model with $s = 10$} \\
$\rho$ & $p$ & CDA &  LDA  & FAIR & DWD & SDA\\
 \hline
\multirow{3}{*}{$0$}
& $200$ &14.32 (3.45) &29.59 (5.31) &8.90 (3.10)&13.88 (3.33) & 8.17 (2.92)\\
& $400$ &19.70 (4.07) &34.76 (5.33) &9.02 (3.23)&19.28 (4.10)& 8.57 (2.65) \\
& $800$ &24.90 (4.78) &39.80 (4.97) &9.80 (3.46)&24.14 (4.36)& 9.64 (5.76) \\
\hline
\multirow{3}{*}{$0.1$}
& $200$ &11.27 (3.56) &20.37 (4.65)&48.25 (7.09)&36.99 (7.07) & 4.90 (2.31)\\
& $400$ &9.87 (3.30)&26.97 (6.04)   &49.39 (5.09) &45.11 (5.11)& 5.12 (2.30)\\
& $800$ &12.94 (3.79)&36.24 (5.72) &50.32 (4.19)&48.65 (4.69)  & 5.82 (5.05)\\
\hline
\multirow{3}{*}{$0.25$}
& $200$ &5.90 (2.72) &13.38 (4.16)  &48.98 (5.30) &42.37 (5.58) & 1.86 (1.34)\\
& $400$ &3.88 (2.15) &19.93 (4.61)  &50.55 (5.20) &47.61 (5.20)& 1.71 (1.22)\\
& $800$ &5.67 (2.62) &31.14 (5.32)  &49.05 (4.71) &48.03 (4.74)& 2.39 (5.32)\\
\hline
\multirow{3}{*}{$0.5$}
& $200$ & 0.61   (0.94)     &      4.77  (2.51)    &         49.76  (5.19)     &    46.03 (4.73)      &   0.10  (0.30)   \\
& $400$ & 0.27   (0.49)     &      9.21  (3.73)    &         48.90  (4.66)     &    47.18 (4.52)      &   0.09  (0.32)   \\
& $800$ & 0.47   (0.73)     &     19.82  (5.13)    &         50.22  (4.83)     &    49.40 (4.83)      &   0.16  (0.75)   \\
\\
\multicolumn{7}{c}{Non-sparse model with $s = p/2$} \\
$\rho$ & $p$ & CDA &  LDA  & FAIR & DWD & SDA\\
 \hline
\multirow{3}{*}{$0$}
& $200$ &   14.66 (4.42)      &  29.30 (5.34)      &    14.40 (4.26)      &  13.60 (4.05)     &   22.05 (4.15)   \\
& $400$ &   19.36 (4.29)      &  34.83 (5.44)      &    19.51 (4.67)      &  18.64 (4.41)     &   30.80 (3.83)   \\
& $800$ &   24.71 (3.95)      &  40.38 (5.36)      &    25.40 (4.91)      &  24.05 (4.16)     &   36.78 (4.44)   \\
\hline
\multirow{3}{*}{$0.1$}
& $200$  &    6.45 (2.90)      &  20.91 (5.02)      &    47.65 (5.87)      &  36.49 (5.89)     &   20.19 (4.03)   \\
& $400$  &    9.47 (3.70)      &  27.82 (4.94)      &    48.82 (5.29)      &  44.33 (5.29)     &   29.79 (4.93)   \\
& $800$  &   13.11 (3.60)      &  36.42 (5.76)      &    50.21 (5.07)      &  48.29 (4.97)     &   35.12 (4.52)   \\
\hline
\multirow{3}{*}{$0.25$}
& $200$ &    2.25 (1.92)      &  13.36 (4.31)      &    48.94 (5.18)      &  42.01 (5.37)     &   15.83 (4.10)   \\
& $400$ &    2.95 (1.75)      &  20.61 (5.17)      &    50.47 (5.74)      &  47.23 (5.38)     &   24.65 (4.43)   \\
& $800$ &    5.34 (2.75)      &  30.43 (5.85)      &    50.24 (4.98)      &  49.03 (5.11)     &   31.68 (4.06)   \\
\hline
\multirow{3}{*}{$0.5$}
& $200$ &     0.56  (0.82)      &    5.60   (3.07)    &       49.91   (5.48)      &    45.69    (5.60)     &     7.31     (3.04)    \\
& $400$ &     0.24  (0.45)      &    9.68   (3.83)    &       49.45   (5.49)      &    47.32    (5.33)     &    16.69     (3.96)    \\
& $800$ &     0.39  (0.57)      &   20.93   (5.39)    &       49.84   (5.59)      &    49.05    (5.22)     &    26.02     (4.36)    \\
\end{tabular}
\caption{Performance of binary classification. Compound Symmetry model with high dimension, low sample size data: Mean misclassification error (in percent) with standard deviation in parentheses.}
\label{tab:sim}
\end{table}

Our results show that CDA performs much better than other methods when the  variables are strongly correlated ($\rho = 0.1, 0.25, 0.5$), for non-sparse models.
In the independent setting ($\rho = 0$), the performance of CDA is comparable to DWD.
FAIR is significantly better than CDA under sparse model with independent variables, because the crucial assumption of FAIR that the non-zero coordinates of $\mu_1 - \mu_0$ are sparse is also satisfied. However, FAIR severely suffers from the violation of the independence assumption, in which case their classification rates are close to 50$\%$. DWD also suffers from the highly correlated structure. SDA performs well for all settings under the sparse model, as expected. However, for non-sparse models, CDA performs significantly better than SDA.

Another observation is that the performance of LDA is better for a larger $\rho$. A possible explanation is that the underlying distribution $N(\mu_i,\Sigma)$ becomes degenerate as $\rho$ increases. The true covariance matrix has a very large first eigenvalue $\lambda_1 = p \rho + (1-\rho)$ compared to the rest of eigenvalues $\lambda_j = 1-\rho $, $2\le j \le p$. As conjectured in Section~\ref{sec:high-d}, both LDA and CDA benefit from extensively incorporating the covariance structure, in spite of the poor estimation of $\Sigma_\rho$ when $p \gg n$. Note that in terms of the conditions C1---C3 in Section~\ref{sec:high-d}, all of these models have signal strength $\delta^2 = 0$ and the condition C3 is violated when $\rho >0$.

Poor performance of FAIR for the strongly correlated case is also reported in \cite{fan2012road}, where they proposed the regularized optimal affine discriminant (ROAD), which is computed by a coordinate descent algorithm. Due to the heavy computational cost, we excluded the ROAD as well as the linear programming discriminant rule (LPD) by \cite{Cai2011}. We exclude results from \cite{Wu2009} since the performance of SDA \citep{Clemmensen:2011} were uniformly better than the method of Wu et al. These methods aim to select few features as well as to classify, based on assumptions of sparse signals. CDA does not require such assumptions.

\subsection{Multi-category classification}\label{sec:sim_multi}

For multi-category classification, CDA is compared with the reduced-rank LDA \citep[\emph{cf}.][]{Hastie2009} and SDA \citep{Clemmensen:2011}.

The setup in the simulation study is as follows. We assume $K = 3$ groups with means $\mu_1 = 0$, $\mu_2 = c_0 (1_{s},0_{p-s})^T$ and $\mu_3 = c_0 (0_{s},1_{s},0_{p-2s})^T$, for either $s=10$ or $s = p/2$.  The common covariance matrix  $\Sigma_{\rho}$ is the compound symmetry model, parameterized by $\rho \in \{0, 0.1, 0.25, 0.5\}$, and the scalar $c_0$ is set as explained in Section~\ref{sec:sim_binary}.

Training and testing data of size $n_1 = n_2 = n_3 = 50$ are generated from normal distribution of dimension $p = 200,400$ and $800$. The classification performances of CDA, reduced-rank LDA and SDA for these models are estimated by 100 replications, and are summarized in Table~\ref{tab:sim-MULTI}.

\begin{table}[tbp]
\centering
\begin{tabular}{ccccc}
\multicolumn{5}{c}{Sparse model with $s = 10$} \\
  & $p$ & CDA &  Reduced-rank LDA &  SDA\\
 \hline
\multirow{3}{*}{$\rho = 0$}
& $200$ &20.82 (4.61)&31.72 (5.71)&13.99 (3.92)\\
& $400$ &28.16 (4.96)&34.42 (5.22)&15.62 (5.90)\\
& $800$ &34.86 (5.31)&39.24 (5.35)&15.94 (5.77)\\
\hline
\multirow{3}{*}{$\rho = 0.1$}
& $200$ &14.01 (4.05)&22.94 (8.06) & 9.06 (2.83) \\
& $400$ &20.93 (5.85)&28.77 (12.03)&10.37 (4.76) \\
& $800$ &30.69 (9.01)&36.52 (12.61)&11.36 (5.78)  \\
\hline
\multirow{3}{*}{$\rho = 0.25$}
& $200$ & 6.38 (2.79)&15.39 (7.51)&3.71 (2.13) \\
& $400$ &12.60 (4.90)&25.62 (15.40)&4.06 (2.37) \\
& $800$ &20.80 (8.45)&30.06 (13.09)&3.88 (2.76)\\
\hline
\multirow{3}{*}{$\rho = 0.5$}
& $200$ & 0.89  (0.93)   &    13.67  (13.57)   &    0.28   (0.56)     \\
& $400$ & 1.21  (1.52)   &     4.42  (6.41)    &    0.34   (0.68)    \\
& $800$ & 4.54  (3.52)   &     8.87  (6.81)    &    0.33   (0.64)     \\
\\
\multicolumn{5}{c}{Non-sparse model with $s = p/2$} \\
  & $p$ & CDA &  Reduced-rank LDA &  SDA\\
 \hline
\multirow{3}{*}{$\rho = 0$}
& $200$      &    21.31 (4.40)       & 32.50  (5.45)     &  37.84 (4.91)  \\
& $400$      &    28.24 (4.73)       & 34.47  (5.15)     &  47.17 (5.18)  \\
& $800$      &    34.10 (5.41)       & 38.47  (5.51)     &  53.73 (4.48)  \\
\hline
\multirow{3}{*}{$\rho = 0.1$}
& $200$    &     5.27 (2.26)       & 47.37  (9.28)     &  30.87 (5.25)  \\
& $400$    &     9.83 (3.02)       & 52.58  (10.86)     &  38.43 (5.16)  \\
& $800$    &    23.70 (4.89)       & 38.79  (8.76)     &  44.66 (4.86)  \\
\hline
\multirow{3}{*}{$\rho = 0.25$}
& $200$ &     1.40 (1.45)       & 54.88  (10.55)     &  23.34 (5.47)  \\
& $400$ &     2.86 (1.71)       & 37.03  (9.61)     &  31.96 (5.12)  \\
& $800$ &     9.17 (3.04)       & 48.41  (13.27)     &  39.79 (5.04)  \\
\hline
\multirow{3}{*}{$\rho = 0.5$}
& $200$ &    0.06   (0.24)  &    34.19  (8.78)   &    11.07   (4.58)   \\
& $400$ &    0.10   (0.36)  &    45.28   (11.03)  &    21.79   (5.80)  \\
& $800$ &    0.51   (0.83)  &    30.81   (7.94)   &    32.86   (5.61)   \\
\end{tabular}

\caption{Performance of multi-category classification. Compound Symmetry model with high dimension, low sample size data: Mean misclassification error (in percent) with standard deviation in parentheses.}
\label{tab:sim-MULTI}
\end{table}

The simulation results for multi-category classification provide a similar insight obtained from the binary classification study. CDA performs better when the correlation between variables is strong for both sparse and non-sparse models. Our method is outperformed by SDA for the sparse model, but has significantly smaller misclassification rates for non-sparse models.

In summary, when the true mean difference is non-sparse and the variables are highly correlated, the proposed method performs better than competitors under high-dimension, low-sample-size situations
for both binary and multi-categoty classification problems.
When the variables are uncorrelated, we also checked that larger values of $c_0$ ensure good performance of the proposed method, as shown in Theorem~\ref{thm:simple}. Our method requires only a split second for computation, while SDA takes tens of seconds for the data in this study.

\section{Real data examples}\label{sec:realdata}
In this section, we provide three real data examples, where the supervision information is categorical with two or more categories.

\subsection{Leukemia data}
We first use the well-known data set of \cite{Golub1999}, which consists of expression levels of 7129 genes from 72 acute leukemia patients. The data are prepared as done in \cite{Cai2011}. In particular, 140 genes with extreme variances, i.e., either larger than $10^7$ or smaller than $10^{3}$ are filtered out. Then genes with the 3000 largest absolute $t$-statistics were chosen. The dataset included 38 training cases (27 AMLs and 11 ALLs) and 34 testing cases (20 AMLs and 14 ALLs).

\begin{figure}[tb]
\centering
\ifpdf
\vskip -2.5in
    \includegraphics[width=1\textwidth]{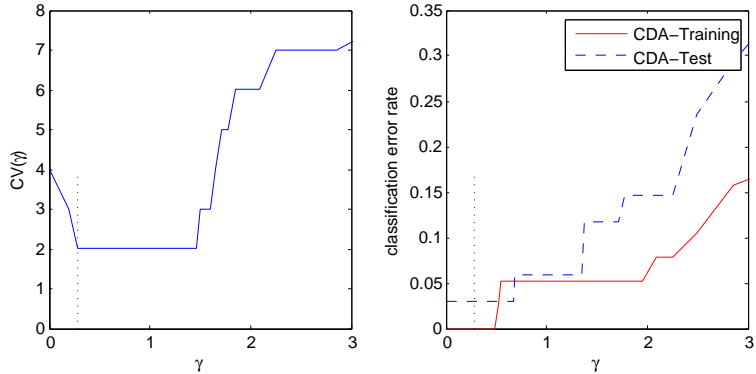}
\vskip -2.5in
\else
    \includegraphics[width=0.7\textwidth]{Leukimia_CDA.eps}
\fi
\caption{Left: Cross validatory errors for $\gamma \in [0,3]$ evaluated for Leukemia data. The $\hat\gamma = 0.279$ (located at the vertical dotted line) is the smallest $\gamma$ that minimizes $CV(\gamma)$. Right: Classification error rates of training and testing set for different $\gamma$s.}
\label{fig:LeukimiaCDA}
\end{figure}

With binary classification in mind, we obtain $w_\gamma$ for a discrete set of $0 \le \gamma < \infty$, using the computational procedure discussed in Section~\ref{sec:computation-binary}. A 10-fold cross-validation leads to $\hat\gamma = 0.279$. As shown in Fig.~\ref{fig:LeukimiaCDA}, the smallest cross validatory misclassification rate is $CV(\hat\gamma) = 2/38$. (We chose to use the smallest $\gamma$ among all minimizers of $CV(\gamma)$.)
Figure~\ref{fig:LeukimiaCDA} also shows the classification errors of training and testing data for different $\gamma$. For smaller $\gamma$ values, including $\gamma = 0$ (corresponding to MDP) and $\hat\gamma$, the classification errors are 1 out of 34 for the test set, and 0 out of 38 for the training set. In comparison, LDA, IR, DWD and SVM result in 2--6 testing errors. From the work of \cite{Fan2008} and \cite{Cai2011}, FAIR and LPD makes only 1/34 testing error. Sparse LDA methods, SLDA of \cite{Wu2009} and SDA of \cite{Clemmensen:2011}, also performed quite well. The results are summarized in Table~\ref{tab:leukimia}.

\begin{table}[tb]
\centering
\begin{tabular}{cccccccccc}
               & CDA & LDA   & IR   & DWD   & SVM & FAIR & LPD  &SLDA &SDA\\
\hline
Training error& 0/38 & 1/38 & 1/38  &  0/38 & 0/38 & 1/38 & 0/38 & 0/38 & 0/38\\
Testing error & 1/34 & 6/34 & 6/34  &  2/34 & 5/34 & 1/34 & 1/34 & 3/34 & 2/34
\end{tabular}
\caption{Classification error of Leukemia data.}
\label{tab:leukimia}
\end{table}

\subsection{Liver cell nuclei shapes}
In a biomedical study, it is of interest to quantify the difference between normal and cancerous cell nuclei, based on the shape of cells.
We analyze discretized cell outlines, aligned to each other to extract shape information \citep{Wang2011b}. 
The data consist of outlines from $n_1 = 250$ normal liver tissues and $n_2 = 250$ hepatoblastoma tissues. Each outline is represented by 90 planar landmarks, leading to $p = 180$.

In the context of discriminating the disease based on the cell shapes, we compare our method with LDA, DWD, FAIR, and a quadratic discriminant analysis (QDA). As explained in Section~\ref{sec:sim}, the threshold value of FAIR is chosen by cross validation. The QDA is modified to have smaller variability by using a ridge-type covariance estimator. 

For the comparison, we randomly assign 50 cases as a testing data set, and each classifier is calculated with the remaining 450 cases. The empirical misclassification rates of classifiers are computed based on the training dataset and on the testing dataset. This is repeated for 100 times to observe the variation of the misclassification rates. For the continuum directions with varying $\gamma$, we observe that the misclassification rates become stable as $\gamma$ increases, as shown in Fig.~\ref{fig:liverCDA}. Both the training and testing error rates become close to $1/3$ as $w_\gamma$ moves closer to MD and to PCA. This is because, for this dataset, $w_{MD}$ and $w_{PCA}$ are close to each other with $\mbox{angle}(w_{MD},w_{PCA}) = 6.67^\circ$, and both exhibit good classification performances, with error rate close to $1/3$. For each training dataset, $\hat\gamma$ is chosen by the cross validation. Many chosen $\hat\gamma$s have values between $(0.1,0.5)$, but a few of those are as large as $\gamma = 3$, as shown in Fig.~\ref{fig:liverCDA}.
The performance of CDA with cross-validated $\gamma$ is compared with other methods in Table~\ref{tab:liver}.
Based on the testing error rate, CDA performs comparable to more sophisticated methods such as FAIR and DWD. Both LDA and QDA tend to overfit and result in larger misclassification rates than other methods.

\begin{figure}[tb]
\centering
\ifpdf
\vskip -0.7in
    \includegraphics[width=0.46\textwidth]{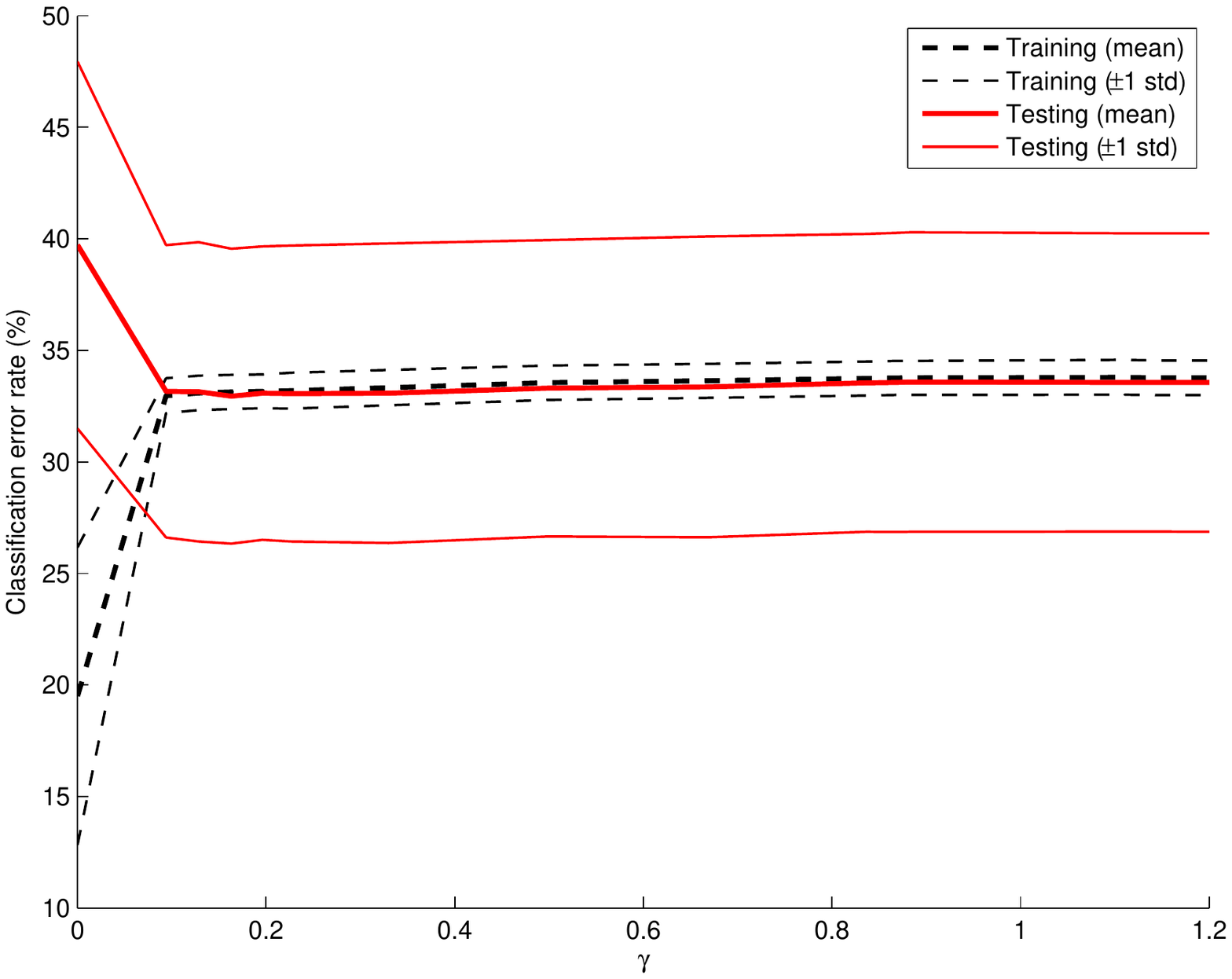}
    \includegraphics[width=0.46\textwidth]{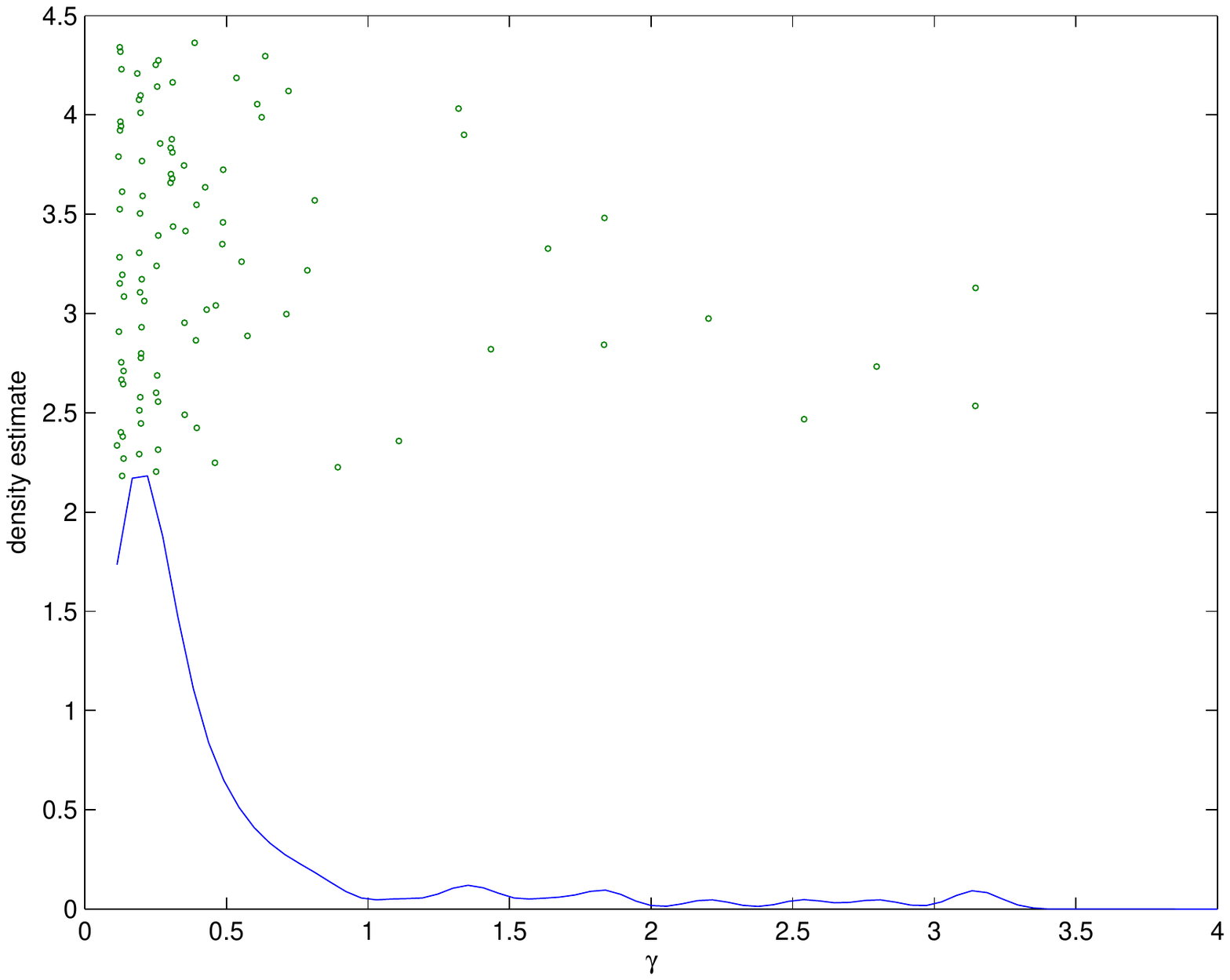}
\vskip -0.7in
\else
    \includegraphics[width=0.4\textwidth]{livercellCDAgamma.ps}
    \includegraphics[width=0.4\textwidth]{nuclei_gammaplot.ps}
\fi
\caption{Left: Classification error rates of training and testing set for different $\gamma$s. Right: A jitter plot with a density estimate for values of $\hat\gamma$ chosen by the cross validation.}
\label{fig:liverCDA}
\end{figure}
\begin{table}[tb]
\centering
\begin{tabular}{ccccccc}
& CDA & LDA & DWD & FAIR & QDA \\
\hline
Train &33.3 (0.79) & 13.9 (1.03) & 30.7 (0.78) & 32.6 (0.84) & 6.7 (2.85)\\
Test  &33.7 (6.38) & 37.4 (6.48) & 33.6 (6.33) & 33.3 (6.17) & 34.4 (6.85)
\end{tabular}
\caption{Misclassification rate (in percent) of liver nuclei outlines data. Mean and standard deviation of ten repetitions are reported.}
\label{tab:liver}
\end{table}


\subsection{Invasive lobula breast cancer data}
Invasive lobula carcinoma (ILC) is the second most prevalent subtype of invasive breast cancer. We use the protein expression data of $n = 817$ breast tumors, measured by RNA sequencing \citep{ciriello2015comprehensive}, to demonstrate the use of continuum directions when the supervision information is categorical with 5 possible values. The dataset consists of $p=16,615$ genes of $n = 817$ breast tumor samples, categorized into five subtypes---luminal A, basal-like, luminal B, HER2-enriched, and normal-like---by a pathology committee. Despite the large size of the data, computing the continuum directions is fast (few seconds, using a standard personal computer). Figure~\ref{fig:LobularFreezeCDAm} displays the spectrum of continuum dimension reduction, parameterized by the meta-parameter $\gamma >0$.

\begin{figure}[t]
\centering
\ifpdf
    \includegraphics[width=1\textwidth]{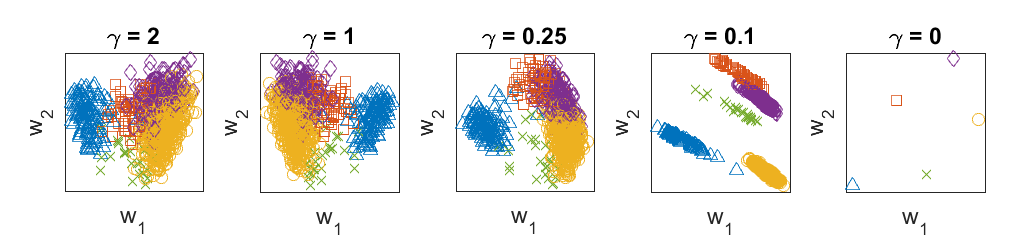}
\else
    \includegraphics[width=1\textwidth]{LobularFreezeCDAm.ps}
\fi
\caption{ILC data projected onto the first two continuum directions, for different choices of $\gamma$. Different colors represent different subtypes of ILC.}
\label{fig:LobularFreezeCDAm}
\end{figure}

To compare the performance of the multicategory classification with the reduced-rank LDA  and SDA of \cite{Clemmensen:2011}, we keep only the 500 genes with the largest standard deviations, and formed a training set of $409$ samples and a testing set of $408$ samples. For each of the classifiers, the training set is used to train the classification rule, while the testing set is used to estimate the misclassification error. We randomly permute the memberships to the training and testing sets, for 10 times.

The result of experiment is summarized in Table~\ref{tab:0}. Our method exhibits the lowest misclassification error rates. Poor performance of SDA may indicate that the true signal in the data is not sparse. As expected, the reduced-rank LDA severely overfits. 

\begin{table}[tb]
\centering
\begin{tabular}{cccc}
& CDA & Reduced-rank LDA &SDA \\
\hline
Train & 10.9 (3.42) &    0 (0)    & 9.58 (7.63)  \\
Test  & 14.5 (1.64) & 26.0 (1.91) & 28.6 (18.8) \\
\end{tabular}
\caption{Misclassification rates (in percent) of invasive lobula breast cancer data. Mean and standard deviation of ten repetitions are reported.}
\label{tab:0}
\end{table}

\section{Discussion}
We proposed a criterion evaluating useful multivariate direction vectors, called continuum directions, while the degrees of supervision from an auxiliary data set are controlled by a meta-parameter $\gamma$. An application of the proposed dimension reduction to classification was also discussed. Numerical properties of the proposed classifier have demonstrated good performance for high dimensional situation. In particular, our method outperforms several other methods when the variance of the first principal component is much larger than the rest.

The proposed method is akin to the continuum regression and connects several well-known approaches, LDA, MDP, MD, ridge estimators and PCA, thus providing a simple but unified framework in understanding the aforementioned methods.
There are several other criteria that also give a transition between LDA (or MDP) and PCA. A slightly modified criterion from (\ref{eq:CDAcriterion}),
$ F_\alpha(w) = (w^TS_Bw)^2 / |w^T(S_T + \alpha I_p) w |$ with the constraint $w^Tw = 1$, gives the ridge solution $\tilde{w}_\alpha = (S_T + \alpha I)^{-} d $ with the same $\alpha \in (-\infty, \lambda_1) \cup [0,\infty)$. This criterion is first introduced in a regression problem \citep{Bjorkstrom1999}, but has not been adopted into classification framework.  \cite{Wang2011} proposed a modified Fisher's criterion
\begin{equation}\label{eq:Wang}
\tau_\delta(w) = \frac{w^TS_Tw}{w^T(S_W+ \delta I)w},
\end{equation}
that bridges between LDA and PCA. For $\delta = 0$, the criterion (\ref{eq:Wang}) becomes identical to equation (\ref{eq:Fishercriterion}) up to the constant 1, thus equivalent to LDA.  In the limit of $\delta\to\infty$, $\delta \tau_\delta(w)$  converges to the criterion for $w_{PC1}$. The maximizer of $\tau_\delta$ is a solution of a generalized eigenvalue problem. We leave further investigation of these criteria as future research directions.


\cite{lee2013hdlss} also discussed discrimination methods that bridge MDP and MD, in high dimensions. The method of \cite{lee2013hdlss} is in fact equivalent to a part of continuum directions, restricted for $\gamma \in [0,1]$. In this paper, the continuum between MDP to PCA is completed by  also considering $\gamma > 1$, the method is extended for supervised dimension reduction, and a connection to continuum regression is made clear.

The study for HDLSS asymptotic behavior of the continuum directions has a room for more investigation.
We conjecture that the magnitude of large eigenvalues, in fast-diverging eigenvalue models, is a key parameter for successful dimension reduction, which may be shown using HDLSS asymptotic investigation similar to \cite{Jung2012}.

\section*{Acknowledgments}
The author is grateful to Gustavo Rohde for sharing the nuclei cell outlines data, and to Jeongyoun Ahn and Myung Hee Lee for their helpful suggestions and discussions.

 \appendix

\section{Technical details}

\subsection{Proof of Theorem~\ref{prop:ridgesolutionthm}.}

In a multivariate linear regression problem, with the $n \times p$ design matrix $X$ and the $n$ vector $y$ of responses,
denote a regressor by $w^T x $ a linear combination of $p$ variables.  Both $X$ and $y$ are assumed centered. Let $V(w) = w'X^TXw$ be the sample variance of the regressor. Let $K (w) = y^T Xw$ be the sample covariance between the regressor and $y$ and $R(w)$ be the sample correlation, which is proportional to $K/\sqrt{V}$.  The following theorem is from \cite{Bjorkstrom1999}.
\begin{thm}[Proposition 2.1 of \cite{Bjorkstrom1999}]\label{thm:Bjorkstrom}
If a regressor $w_f$ is defined according to the rule
$$ wf = \arg\max_{\norm{w} = 1} f (K^2(w),V(w)),$$
where $f(K^2,V)$ is increasing in $K^2$ (or $R^2$) for constant $V$, and increasing in $V$ for constant $R^2$, and if $X^Ty$ is not orthogonal to all eigenvectors corresponding to the largest eigenvalue $\lambda_1$ of $X^TX$, then there exists a number $\alpha$ such that $w_f \propto (X^TX + \alpha I)^{-1}$, including the limiting cases $\alpha \downarrow 0, \alpha \uparrow \infty$ and $\delta \uparrow -\lambda_1$.
\end{thm}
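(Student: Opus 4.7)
The strategy is to invoke Björkström's Proposition 2.1 — the theorem stated at the end of the excerpt — after rewriting the constrained maximization of $T_\gamma$ in his $K^2$-$V$ form, and then to track exactly which values of $\alpha$ can arise from the first-order conditions of our problem.

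First I would set up the translation. Since $\mathrm{rank}(S_B)=1$, factor $S_B = dd^T$ (in the binary case $d = \bar{x}_{1\cdot}-\bar{x}_{2\cdot}$), and put
\begin{equation*}
K(w) := d^T w, \qquad V(w) := w^T S_T w, \qquad R^2(w) := K^2(w)/V(w),
\end{equation*}
so that $K^2(w) = w^T S_B w$ and
\begin{equation*}
T_\gamma(w) \;=\; K^2(w)\,V(w)^{\gamma-1} \;=\; f\bigl(K^2(w), V(w)\bigr), \qquad f(a,b) := a\,b^{\gamma-1}.
\end{equation*}
Next I would verify the two monotonicity requirements of Björkström's theorem: for fixed $V>0$, $\partial f/\partial a = V^{\gamma-1}>0$, so $f$ is strictly increasing in $K^2$; and along any level set $R^2 = \text{const}$, substituting $K^2 = R^2 V$ gives $f = R^2\, V^\gamma$, strictly increasing in $V$ whenever $\gamma>0$. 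The non-orthogonality hypothesis on $d$ relative to the eigenvectors of $\lambda_1(S_T)$ is exactly Björkström's hypothesis on $X^T y$ relative to $X^T X$. Applying his proposition yields some $\alpha$ (or a limit) with $w_\gamma \propto (S_T+\alpha I)^{-} d$.

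The remaining — and main — obstacle is to show that $\alpha$ lies only in $(-\infty,-\lambda_1)\cup[0,\infty)$, never in the forbidden gap $(-\lambda_1,0)$, and that the limits $\alpha\to 0$, $\alpha\to\pm\infty$ and $\alpha\to -\lambda_1^-$ are attained. For this I would use the first-order condition already derived in (\ref{eq:Lagrangian}). Left-multiplying by $w^T$ and using $w^T w = 1$ pins the Lagrange multiplier to $\lambda = \gamma$; substituting $S_B w = (d^T w)\,d$ and rearranging produces
\begin{equation*}
(S_T + \alpha I)\,w_\gamma \;\propto\; d, \qquad \alpha \;=\; \frac{\gamma}{1-\gamma}\,\frac{w_\gamma^T S_T w_\gamma}{w_\gamma^T w_\gamma},
\end{equation*}
which is precisely (\ref{eq:alpha-st}). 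Because $w_\gamma^T S_T w_\gamma / w_\gamma^T w_\gamma \in [0,\lambda_1]$, the sign of $\gamma/(1-\gamma)$ forces $\alpha\in[0,\infty)$ when $\gamma\in(0,1)$ and $\alpha\le 0$ when $\gamma>1$. To exclude the gap $(-\lambda_1,0)$ in the second case, I would argue via Björkström's own monotonicity device: along the ridge path $\alpha\mapsto (S_T+\alpha I)^{-1} d$, one can show that for each attainable value of $R^2$ there is at most one critical point in the admissible region $\alpha\in(-\infty,-\lambda_1)\cup[0,\infty)$ and that any competing critical point in the gap has smaller $V$ and hence smaller $f=R^2 V^\gamma$, so it cannot be the maximizer. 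Finally, I would read off the three limiting identifications directly from the formula $w_\gamma \propto (S_T+\alpha I)^{-} d$: $\alpha\to 0^+$ yields $S_T^- d = w_{MDP}$; $\alpha\to\pm\infty$ yields $w_{MD}\propto d$; and $\alpha\to -\lambda_1^-$ forces $w_\gamma$ to align with a top eigenvector of $S_T$, i.e.\ $w_{PC1}$ (this last requires the non-orthogonality hypothesis so that the top eigenspace indeed participates in the limit).
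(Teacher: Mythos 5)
The statement you were given is Bj\"orkstr\"om's Proposition~2.1 itself --- a result the paper imports verbatim and does not prove (its appendix merely quotes the proposition and then checks that the criterion $T_\gamma$ satisfies its hypotheses, so that Theorem~\ref{prop:ridgesolutionthm} follows as a special case). Your proposal opens by \emph{invoking} Bj\"orkstr\"om's Proposition~2.1 and then uses it to derive Theorem~\ref{prop:ridgesolutionthm}; as a proof of the stated theorem this is circular, since the thing you assume is exactly the thing to be proved. What you have actually written is a more detailed version of the paper's proof of Theorem~\ref{prop:ridgesolutionthm}: your verification that $f(K^2,V)=K^2V^{\gamma-1}$ is increasing in $K^2$ for fixed $V$ and equals $R^2V^{\gamma}$ for fixed $R^2$ is precisely the one-line check the paper performs, and your identification of the limits $\alpha\to 0$, $\alpha\to\pm\infty$, $\alpha\to-\lambda_1$ with $w_{MDP}$, $w_{MD}$, $w_{PC1}$ is consistent with Proposition~\ref{prop:RRtoMDP}.

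If one instead tries to read your proposal as a self-contained proof of the proposition for the special $f$ at hand, the Lagrangian computation does establish that every critical point of $T_\gamma$ on the sphere has the ridge form with $\alpha=\frac{\gamma}{1-\gamma}\,w^TS_Tw/w^Tw$, which is genuine progress. But the one step requiring independent work --- showing that the \emph{global} maximizer's $\alpha$ never falls in the gap $(-\lambda_1,0)$ when $\gamma>1$ --- is only gestured at, and the gesture is flawed: comparing two critical points by ``smaller $V$ hence smaller $f=R^2V^{\gamma}$'' ignores that $R^2$ also changes between critical points, so the inequality does not follow. Note that $\frac{\gamma}{1-\gamma}V$ with $V\in[0,\lambda_1]$ only constrains $\alpha$ to $\bigl[-\tfrac{\gamma}{\gamma-1}\lambda_1,\,0\bigr]$ for $\gamma>1$, an interval that \emph{contains} the forbidden gap; excluding it is exactly the nontrivial content of Bj\"orkstr\"om's proposition and cannot be waved through. (The paper carries out an explicit comparison of $T_\gamma$ along the ridge path only in the complementary degenerate case, Proposition~\ref{prop:ridgesolutionexceptionthm}, where $d$ is orthogonal to the top eigenspace; an analogous explicit monotonicity argument is what your sketch would need here.)
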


A two-group classification problem is understood as a special case of regression. In particular, let $y$ be $ +1$ if the $i$th observation is in the first group or $-1$ if it is in the second group. Then the total variance matrix $S_T \propto X^TX$ and  the mean difference $d = X^Ty$. The criterion (\ref{eq:CDAcriterion}) is $K^2(w) V^{\gamma-1}(w)$, which satisfies the assumptions of Theorem~\ref{thm:Bjorkstrom}. Theorem~\ref{prop:ridgesolutionthm} is thus a special case of Theorem \ref{thm:Bjorkstrom}.

\subsection{Analytic solution for the rare case}

The ridge solution may not give a global maximum of $T_\gamma$ when the assumption in Theorem~\ref{prop:ridgesolutionthm} does not hold. We give an analytic solution for such a case.
It is convenient to write $w$ in the canonical coordinates of $S_T$. Let $S_T = U\Lambda U^T$ be the eigen-decomposition of $S_T$ with $\Lambda = \mbox{diag}(\lambda_1,\ldots, \lambda_m)$, $U = [u_1,\ldots, u_m]$ for $m = \min(n-1,p)$, with convention $\lambda_i \ge \lambda_{i+1}$. To incorporate any duplicity of the first eigenvalue let $\iota$ represent the number of eigenvalues having the same value as $\lambda_1$, that is, $\lambda_1 = \ldots = \lambda_\iota$. Denote $\Lambda_1 =  \mbox{diag}(\lambda_1,\ldots, \lambda_\iota) = \lambda_1 I_\iota$ and $\Lambda_2 =  \mbox{diag}(\lambda_{\iota+1},\ldots, \lambda_m)$.
Let $z = U^T w$ and $\delta = (\delta_1,\ldots, \delta_m)^T =  U^T d$. 

\begin{prop}\label{prop:ridgesolutionexceptionthm}
Suppose $d$ is orthogonal to all eigenvectors corresponding to $\lambda_1$ and is not orthogonal to all eigenvectors corresponding to $\lambda_{\iota+1}$.
Let
$$ z_\alpha =
  \frac{(\Lambda_2 + \alpha I)^{-1} \delta_2 }{\sqrt{\delta_2^T (\Lambda_2 + \alpha I)^{-2} \delta_2} }  \ \  \mbox{for} \ \  \alpha \in (-\infty, -\lambda_1] \cup [0,\infty).
$$
\begin{enumerate}
\item[(i)] If $ z_{-\lambda_1}^T (\lambda_1 I - \Lambda_2) z_{-\lambda_1} \le \lambda_1/\gamma$, then $w_\gamma = U \tilde{z}$, $ \tilde{z}^T = [0_\iota^T, z_\alpha^T]$ for some $\alpha \in (-\infty, -\lambda_1] \cup [0,\infty)$.
\item[(ii)] If $ z_{-\lambda_1}^T (\lambda_1 I - \Lambda_2) z_{-\lambda_1} > \lambda_1/\gamma$, then there exist multiple solutions $w_\gamma = U \hat{z}$, $\hat{z}^T = (\hat{z}_1^T,\hat{z}_2^T)$, of (\ref{eq:CDAcriterion}) satisfying
$$\hat{z}_1 \in \{z_1 \in  \Real^{\iota} : z_1^Tz_1 = 1 - {\lambda_1} / ({\gamma}{z_{-\lambda_1}^T (\lambda_1 I - \Lambda_2) z_{-\lambda_1}})  \}$$ and $$\hat{z}_2 = \sqrt{\frac{\lambda_1}{\gamma}} \frac{(\Lambda_2-\lambda_1 I)^{-1} \delta_2}{\sqrt{ \delta_2^T (\lambda_1 I - \Lambda_2)^{-1} \delta_2} }.$$
\end{enumerate}
\end{prop}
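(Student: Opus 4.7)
My plan is a direct Lagrange multiplier analysis in the canonical coordinates of $S_T$. With $S_T = U\Lambda U^T$ as in the proposition, set $z = U^T w$ and $\delta = U^T d$; by assumption $\delta = (0_\iota^T, \delta_2^T)^T$. The problem becomes maximization of $(\delta_2^T z_2)^2 (\lambda_1 \|z_1\|^2 + z_2^T \Lambda_2 z_2)^{\gamma-1}$ subject to $\|z_1\|^2 + \|z_2\|^2 = 1$. The feasible set is compact, so a global maximizer exists and satisfies the KKT system (\ref{eq:Lagrangian}) with Lagrange multiplier evaluating to $\gamma$, by the same computation performed just below equation (\ref{eq:Lagrangian}) in the main text.

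Splitting the KKT condition into its $z_1$ and $z_2$ blocks is the crux. The $z_1$-block reads $(\gamma-1)\lambda_1 z_1/V = \gamma z_1$ with $V := w^T S_T w$, producing two branches. On branch A ($z_1 = 0$), the $z_2$-block gives $z_2 \propto (\Lambda_2 + \alpha I)^{-1}\delta_2$ with $\alpha = -\gamma V/(\gamma-1)$ determined self-consistently by $V = z_\alpha^T \Lambda_2 z_\alpha$; normalizing recovers exactly the ridge-form candidates $\tilde z = (0_\iota^T, z_\alpha^T)^T$ of part (i). On branch B ($z_1 \neq 0$), the $z_1$-block forces $V = (\gamma-1)\lambda_1/\gamma$ (so in particular $\gamma>1$), hence $\alpha = -\lambda_1$ and $\hat z_2 = c(\Lambda_2 - \lambda_1 I)^{-1}\delta_2$. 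Inserting this $\hat z_2$ into the two active constraints yields the unique positive $c$ that reproduces the $\hat z_2$ of part (ii) together with
\begin{equation*}
\|z_1\|^2 = 1 - \frac{\lambda_1}{\gamma\, z_{-\lambda_1}^T(\lambda_1 I - \Lambda_2) z_{-\lambda_1}},
\end{equation*}
which is nonnegative exactly when the threshold of part (ii) holds, leaving $\hat z_1$ free on the sphere of the stated radius.

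The remaining step, which I expect to be the main obstacle, is to show that the critical points identified above are global maximizers---branch A in case (i) and branch B in case (ii). The cleanest route I see is a profile/envelope argument: fix $s = \|z_1\|^2 \in [0,1]$, solve the inner maximization over $\{z_2 : \|z_2\|^2 = 1-s\}$ (itself a ratio-of-quadratics problem whose optimizer is of ridge form in $z_2$ alone), and let $M(s)$ denote the resulting value. Reusing the KKT calculation above, every interior stationary point of $M$ satisfies $V = (\gamma-1)\lambda_1/\gamma$, giving the branch-B value $s^\ast = 1 - \lambda_1/[\gamma\, z_{-\lambda_1}^T(\lambda_1 I - \Lambda_2) z_{-\lambda_1}]$ if $s^\ast \in (0,1)$ and none otherwise. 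Unimodality of $M$ on $[0,1]$, together with the boundary behavior $M(s)\to 0$ as $s\to 1$, then pins the maximum at $s^\ast$ in case (ii) and at $s=0$ in case (i). Two technicalities that need care are the unimodality claim for $M$ (a ridge-parameter monotonicity argument analogous to the one underlying Theorem~\ref{prop:ridgesolutionthm}) and ruling out degenerate candidates on the boundary $s=1$.
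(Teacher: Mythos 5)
Your critical-point analysis is exactly the paper's: same canonical coordinates, same Lagrange multiplier $\gamma$, same split of the stationarity system into a $z_1$-block and a $z_2$-block, and the same two branches, with $w^TS_Tw=(\gamma-1)\lambda_1/\gamma$ forcing $\hat z_2\propto(\Lambda_2-\lambda_1 I)^{-1}\delta_2$ on the branch $z_1\neq 0$. The problem is what you defer. The global-optimality step that you park under ``unimodality of $M(s)$'' is not a technicality: it carries the entire content of the proposition beyond stationarity, and it is left unproven (your envelope computation only locates where interior stationary points of the profile can sit; it does not rule out multiple local maxima, non-uniqueness of the inner maximizer, or nonsmoothness of $M$). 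The paper needs no such claim. In case (i) it argues by infeasibility: the branch-B candidate must satisfy $\|\hat z_2\|^2\le 1$, which is equivalent to $\gamma\ge\lambda_1 M_2(\lambda_1)/M_1(\lambda_1)$ with $M_k(a)=\delta_2^T(aI-\Lambda_2)^{-k}\delta_2$, i.e.\ to the reverse of hypothesis (i); so under (i) every critical point (hence the maximizer, which exists by compactness) has $z_1=0$, except in the boundary case of equality where branch B degenerates to $z_1=0$ anyway. In case (ii) the $z_1\neq0$ critical points are written down directly.

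There is also a step that is missing rather than merely deferred. Your branch-A conclusion asserts that normalizing the stationary solutions ``recovers exactly the ridge-form candidates of part (i)'', i.e.\ with $\alpha\in(-\infty,-\lambda_1]\cup[0,\infty)$. Stationarity, or Theorem~\ref{prop:ridgesolutionthm} applied to the reduced problem in $(\Lambda_2,\delta_2)$, does not give that: the top eigenvalue of $\Lambda_2$ is $\lambda_{\iota+1}<\lambda_1$, so one only obtains $\alpha\in(-\infty,-\lambda_{\iota+1})\cup[0,\infty)$, and stationary points with $\alpha\in(-\lambda_1,-\lambda_{\iota+1})$ genuinely exist. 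Excluding them is precisely where hypothesis (i) enters, and it is the most technical part of the paper's proof: one shows $a\mapsto aM_2(a)/M_1(a)$ is decreasing on $(\lambda_{\iota+1},\lambda_1]$, so the bound $\gamma\le\lambda_1M_2(\lambda_1)/M_1(\lambda_1)$ propagates to all such $a$, and then a Cauchy--Schwarz argument gives $\partial_a\log T_\gamma\big((0_\iota,z_{-a})\big)\ge0$ on that interval, so no ridge parameter in $(-\lambda_1,-\lambda_{\iota+1})$ can be maximal. Your proposal contains neither this argument nor a substitute, so even granting unimodality of $M$ it proves a weaker statement than (i).
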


\begin{proof}[Proof of Proposition \ref{prop:ridgesolutionexceptionthm}]

Recall $z = U^T w$ and $\delta = (\delta_1,\ldots, \delta_m)^T =  U^T d$. Grouping $z$ and $\delta$ into the first $\iota$ elements and the rest, write $z^T = (z_1^T, z_2^T)$, $\delta^T = (\delta_1^T, \delta_2^T)$. If $d$ is orthogonal to all eigenvectors corresponding to $\lambda_1$, then $\delta_1 = 0$. Rewriting equation (\ref{eq:Lagrangian}) in the eigen-coordinates gives two systems of equations
\begin{eqnarray}
       0 + (\gamma - 1) \frac{\Lambda_1 z_1}{z_1^T\Lambda_1z_1+z_2^T\Lambda_2z_2} - \gamma z_1 &=& 0, \label{eq:orthopart1}\\
       \frac{\delta_2}{z_2^T\delta_2} + (\gamma -1) \frac{\Lambda_2 z_2}{z_1^T\Lambda_1z_1+z_2^T\Lambda_2z_2} - \gamma z_2&=& 0. \label{eq:orthopart2}
\end{eqnarray}
If $\norm{z_1} > 0 $, then we have from (\ref{eq:orthopart1})
\begin{equation}\label{eq:orthopart3}
\lambda_1 \norm{z_1}^2 = z_1^T\Lambda_1 z_1 = \frac{\gamma - 1}{\gamma} \lambda_1 - z_2^T\Lambda_2 z_2.
\end{equation}
Equations (\ref{eq:orthopart2}) and (\ref{eq:orthopart3}) lead to
 $$ z_2 = c_\gamma (\Lambda_2 - \lambda_1 I)^{-1} \delta_2,$$
where $c_\gamma$ satisfies $c_\gamma^2 = -{\lambda_1} / ( {\gamma} {\delta_2^T (\Lambda_2 -\lambda_1 I)^{-1} \delta_2}) $, which is obtained  from the constraint $\norm{z_1}^2 + \norm{z_2}^2 = 1$. Finally, we check that such a solution exists if $z_2^Tz_2 \le 1$, that is,
\begin{equation}\label{eq:orthopartexistence}
\frac{\sum_{i=\iota+1}^m \lambda_1\delta_i^2/(\lambda_1- \lambda_i)^2}{\sum_{i=\iota+1}^m \delta_i^2/(\lambda_1- \lambda_i)} \le \gamma.
\end{equation}

The criterion $T_\gamma$ in the canonical coordinate is proportional to
$$T_\gamma (z) = (z_2^T\delta_2)^2 (\lambda_1 z_1^T z_1 + z_2^T \Lambda_2 z_2)^{\gamma-1}.$$
Thus $T_\gamma$ is maximized by $\hat{z}^T = (\hat{z}_1^T,\hat{z}_1^T)$ for any $\hat{z}_2 = \pm z_2$ and any $\hat{z}_1$ that  satisfies (\ref{eq:orthopart3}). This proves (ii).

If (\ref{eq:orthopartexistence}) does not hold, then by contradiction we have $\norm{z_1} = 0$. Thus $\tilde{z}$ is of the form $(0_\iota, z_2)$ for $z_2$ satisfying (\ref{eq:orthopart2}). Since the first coordinate of $\delta_{2}$ is nonzero, an application of Theorem \ref{prop:ridgesolutionthm} leads that there exists $\alpha \in (-\infty, -\lambda_{\iota+1}) \cup [0,\infty)$ such that $z_2 \propto  (\Lambda_2 + \alpha I)^{-} \delta_2$.

To conclude (i), we need to rule out the possibility of $\alpha$ having values in $(-\lambda_1, -\lambda_{\iota+1})$.
Let $M_k = M_k(a) = \delta_2^T (a I - \Lambda_2)^{-k} \delta_2$ for $k = 1,2,\ldots $.  The derivative of $M_k$ with respect to $a$ is $M_k^\prime = -k M_{k+1}$. We have $M_k(a) > 0$ for $a \in (\lambda_{\iota+1}, \lambda_{1}]$. The assumption of (i) is written as $\gamma \le \lambda_1 M_2(\lambda_1) / M_1(\lambda_1)$. It can be shown that $a M_2(a) / M_1(a)$ is a decreasing function of $a >  \lambda_{\iota+1}$. This leads to
\begin{equation}\label{eq:finaleq}
\gamma \le aM_2 / M_1, \ \ \mbox{ for any } \ a \in  (\lambda_{\iota+1}, \lambda_1].
\end{equation}
For $z_\alpha = (\Lambda_2 + \alpha I)^{-} \delta_2 / \norm{(\Lambda_2 + \alpha I)^{-} \delta_2} $,
$T_\gamma( (0_\iota, z_{-a}) ) = M_1^2 / M_2 (a - M_1/M_2)^{\gamma-1}$, and the derivative of $\log (T_\gamma)$
$$\frac{ 2( M_2^2 - M_1M_3)}{M_1M_2 (M_2 a - M_1) } ( \gamma M_1 - M_2 a) \ge 0 \  \mbox{ for any } \ a \in  (\lambda_{\iota+1}, \lambda_1].$$
We have used (\ref{eq:finaleq}) and the Cauchy-Schwartz inequality. Since $T_\gamma$ is increasing in $a$, any $z_{\alpha}$ with $\alpha \in (-\lambda_1, -\lambda_{\iota+1})$ can not be a  maximizer of $T_\gamma$ for any $\gamma$, which completes the proof.
\end{proof}

%


\subsection{Proofs of Proposition \ref{prop:RRtoMDP} and  Lemmas~\ref{lem:optimization}-\ref{lem:range}}

\begin{proof}[Proof of Proposition \ref{prop:RRtoMDP}]
We first show that $(S_T+\alpha I)^{-1} d \propto (S_W+\alpha I)^{-1} d $. Let $\Omega = S_W+\alpha I$, whose inverse exists for $\alpha > 0$. Then $S_T+\alpha I = \Omega + c_0 d d^T$ for $c_0 = \frac{n_1n_2}{n^2}$. By Woodbury's formula, $(S_T+\alpha I)^{-1} = \Omega^{-1} - c_1 \Omega^{-1}d d^T \Omega^{-1}$ for some constant $c_1$. Therefore,  $(S_T+\alpha I)^{-1} d = \Omega^{-1} d - c_1 \Omega^{-1}d d^T \Omega^{-1} d = c_2 \Omega^{-1} d \propto (S_W+\alpha I)^{-1} d$.

The ridge solution $w^{R}_\alpha $ lies in the range of $S_T$, as shown in Lemma~\ref{lem:range} in the Appendix. Writing $w^{R}_\alpha$ in the eigen-coordinates of $S_T$ makes the proof simple. Let $S_T = U\Lambda U^T$ be the eigen-decomposition of $S_T$ with $\Lambda = \mbox{diag}(\lambda_1,\ldots, \lambda_m)$, $U = [u_1,\ldots, u_m]$ for $m = \min(n-1,p)$. Then
for $z^{R}_\alpha = U^T w^{R}_\alpha$ and $\delta = (\delta_1,\ldots, \delta_m)^T =  U^T d$, we have
$z^{R}_\alpha \propto (\frac{\delta_1}{\lambda_1+\alpha},\ldots, \frac{\delta_m}{\lambda_m+\alpha})^T$, which leads to the continuity of $w^{R}_\alpha = U z^{R}_\alpha$ with respect to $\alpha \in [0,\infty)$. It is now easy to see that $w^{R}_{\alpha} \to w_{MDP} \propto U \Lambda^{-1} \delta$ as $\alpha \to 0$.
For the last argument, $z^R_\alpha \propto \alpha(\frac{\delta_1}{\lambda_1+\alpha},\ldots, \frac{\delta_m}{\lambda_m+\alpha})^T \to \delta$ as $\alpha \to \infty$.
\end{proof}

\begin{proof}[Proof of Lemma~\ref{lem:optimization}]
Part (i) is trivial. For part (ii), note that for all $w \in \mathcal{S}$, $P_k w = w$. Replacing $w$ by $P_k w$ in $T_\gamma(w)$ gives the result. For part (iii), we use Lemma~\ref{lem:range} in the Appendix which shows that the solution $w$ of maximizing $T^{(k)}_\gamma$ lies in the column space of $P_kS_TP_k$. Thus, the solution $w_{(k+1)}$ satisfies the constraint $w_{(k+1)}^T S_T w_\ell = 0 $ for $\ell = 1,\ldots,k$.
\end{proof}

\begin{proof}[Proof of Lemma~\ref{lem:range}]
Denote the column space of $S_T$ by $\mathcal{R}_T$. Let $\mbox{rank}(S_T) = m \le \min (n-1, p)$.  Then for any $w \in \Real^p$ with $\norm{w} = 1$, let $w_P$ be the orthogonal projection of $w$ onto $\mathcal{R}_T$. Then $\norm{w_P} \le 1$ where the equality holds if and only if $w \in \mathcal{R}_T$. Let $\tilde{w} = w_P/ \norm{w_P}$. Then since $w^TS_Tw = w_P^TS_Tw_P$ and $w^TS_Bw = w_P^TS_Bw_P$, we have for $\gamma \ge 0 $,
$$T_\gamma(w) = (\tilde{w}^TS_B \tilde{w})(\tilde{w}^TS_B \tilde{w})^{\gamma-1} \norm{w_P}^{2\gamma} \le T_\gamma(\tilde{w}).$$
Thus the maximizer of $T_\gamma(w)$ always lies in $\mathcal{R}_T$.
\end{proof}

\subsection{Proof of Theorem \ref{thm:simple}}

We first show that the true continuum direction is asymptotically parallel to the mean difference direction.
Assume without loss of generality that the true pooled covariance matrix $\Sigma_W$ is a diagonal matrix, for every $p$.

\begin{lem}
\label{thm:true}
Assume conditions C1---C3. For each $\alpha\neq 0$, $\mbox{Angle}(\omega_\alpha, \mu) \to 0$ as $p \to\infty$.

\end{lem}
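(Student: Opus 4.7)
The plan is to exploit two structural features: the rank-one form of $\Sigma_B = \mu \mu^T$ and the fact that, as $p \to \infty$, the ridge parameter $\alpha_p = \alpha p$ is dominant over $\Sigma_W$ in operator norm under C2--C3. Taken together, they should force $\omega_\alpha$ to be essentially a rescaling of $\mu$.

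First I would remove the $\Sigma_B$ term by Sherman--Morrison. Writing $A_p = \Sigma_W + \alpha_p I_p$, the identity
$$(A_p + \mu \mu^T)^{-1} \mu \;=\; \frac{A_p^{-1}\mu}{1 + \mu^T A_p^{-1}\mu}$$
shows that $\omega_\alpha$ is parallel to $v_p := A_p^{-1}\mu$, so the angle between $\omega_\alpha$ and $\mu$ equals the angle between $v_p$ and $\mu$. Factoring out $\alpha p$, I would rewrite
$v_p = (\alpha p)^{-1}(I_p + E_p)^{-1}\mu$ with $E_p = (\alpha p)^{-1}\Sigma_W$, and focus the rest of the argument on $(I_p + E_p)^{-1}\mu$.

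Next I would control $\|E_p\|_{\mathrm{op}}$. Since $\lambda_{\max}(\Sigma_W)^2 \le \mathrm{tr}(\Sigma_W^2)$, and since C2 together with C3 force $\mathrm{tr}(\Sigma_W^2) = o(p^2)$, one gets $\lambda_{\max}(\Sigma_W) = o(p)$, whence $\|E_p\|_{\mathrm{op}} \to 0$. Invoking the Neumann identity $(I_p + E_p)^{-1}\mu = \mu - E_p (I_p + E_p)^{-1}\mu$, this yields a decomposition
$$(I_p + E_p)^{-1}\mu \;=\; \mu + r_p, \qquad \|r_p\| \,\le\, \frac{\|E_p\|_{\mathrm{op}}}{1 - \|E_p\|_{\mathrm{op}}}\,\|\mu\|,$$
so that $\|r_p\|/\|\mu\| \to 0$.

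To finish, I would simply expand the cosine:
$$\cos\bigl(\mbox{Angle}(v_p, \mu)\bigr) \;=\; \frac{\|\mu\|^2 + \mu^T r_p}{\|\mu\|\,\|\mu + r_p\|},$$
and note that $|\mu^T r_p|/\|\mu\|^2 \le \|r_p\|/\|\mu\| \to 0$ and $\|\mu + r_p\|/\|\mu\| \to 1$, hence $\cos\mbox{Angle}(v_p,\mu) \to 1$. The main obstacle, and the only place the C-conditions really enter, is the operator-norm bound $\|E_p\|_{\mathrm{op}} \to 0$; note in particular that the argument requires \emph{no} assumption on $\|\mu\|$ beyond $\mu \ne 0$ (so C1 is not used in this lemma), and the Gaussianity of the $x_{ki}$ is irrelevant here since the statement is purely about the population quantity $\omega_\alpha$.
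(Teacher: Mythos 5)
Your proposal is correct, and its first step is the same as the paper's: apply Sherman--Morrison/Woodbury to $(\Sigma_W + \alpha_p I_p + \mu\mu^T)^{-1}\mu$ to conclude $\omega_\alpha \propto A_p^{-1}\mu$. Where you genuinely diverge is in the second half. The paper assumes WLOG that $\Sigma_W$ is diagonal and evaluates the cosine directly, sandwiching $\mu^T A_p^{-1}\mu$ and $p^{1/2}\|A_p^{-1}\mu\|$ by the extreme eigenvalues so that, invoking C1, numerator and denominator converge to $\delta^2/\alpha$ and $(\delta/\alpha)\cdot\delta$ respectively; this limit comparison implicitly needs $\delta>0$ (for $\delta=0$ it is a $0/0$ situation). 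You instead use only the operator-norm bound $\lambda_{\max}(\Sigma_W)=o(p)$, which does follow from C2--C3 (a one-line addendum would note $\lambda_{\max}(\Sigma_W)\le \tfrac12(\lambda_{\max}(\Sigma_1)+\lambda_{\max}(\Sigma_2))$ or a Cauchy--Schwarz bound on $\mbox{trace}(\Sigma_W^2)$ to pass from $\Sigma_1,\Sigma_2$ to $\Sigma_W$), and then a Neumann-series relative-perturbation bound $(I_p+E_p)^{-1}\mu=\mu+r_p$ with $\|r_p\|=o(\|\mu\|)$. This buys you two things: no diagonalization step, and no use of C1 at all, so the conclusion holds uniformly in $\|\mu\|$ and in particular covers the boundary case $\delta=0$ that the paper's limiting argument does not strictly handle. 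Your observation that Gaussianity is irrelevant for this purely population-level statement is also accurate. One shared caveat, not a gap specific to you: for $\alpha<0$ both arguments implicitly treat directions up to sign (the proportionality constant $1+\mu^T A_p^{-1}\mu$ can be negative, and must be assumed nonzero for $\Sigma_T+\alpha_p I_p$ to be invertible), exactly as the paper does.
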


\begin{proof}
[Proof of Lemma~\ref{thm:true}]

Let $A_p$ denote the $p\times p$ diagonal matrix with diagonal values $\lambda_i + \alpha_p$ where $\lambda_i$ is the $i$th largest eigenvalue of $\Sigma_W$. Using Woodbury's formula, we get
\begin{align*}
\omega_\alpha  \propto [A_p + \mu\mu^T]^{-1}\mu
               = A_p^{-1}\mu - \frac{A_p^{-1}\mu (\mu^TA_p^{-1}\mu)}{1+\mu^TA_p^{-1}\mu}
               \propto A_p^{-1}\mu.
\end{align*}
Then $\mbox{Angle}(\omega_\alpha, \mu) = \mbox{Angle}(A_p^{-1}\mu, \mu) = \cos^{-1}[\mu^T A_p^{-1}\mu / ( \norm{A_p^{-1} \mu} \norm{ \mu})  ]$. We then have
$\mu^T A_p^{-1}\mu \le (\lambda_p + \alpha_p)^{-1}\sum_{i=1}^n \mu^2_i  = (\lambda_p p^{-1} + \alpha)^{-1} \norm{\mu}^2 / p \to \delta^2/\alpha$,  $p^{1/2} \norm{A_p^{-1}\mu} \ge (\lambda_1 p^{-2} + \alpha)^{-1} p^{-1/2}\norm{\mu} \to \delta / \alpha$, as $p\to\infty$. This, together with the condition C1, leads that $\mbox{Angle}(A_p^{-1}\mu, \mu) \to 0$   as $p\to\infty$.
\end{proof}

We utilize a few relevant results in literature. Recall that $d = \bar{x}_1 - \bar{x}_2$ and $\mu = \mu_1-\mu_2$ are the sample and population mean difference vectors. The notation $\mbox{Angle}(x, \mathcal{R}_W)$, for $x \in \Re^p$, and a subspace $\mathcal{R}_W \subset \Re^p$, stands for the canonical angle, i.e. $\mbox{Angle}(x, \mathcal{R}_W) = \min_{y \in \mathcal{R}_W, y\neq 0} \mbox{Angle}(x,y)$.

\begin{lem}\label{lem:previous_result}
Assume the condition of Theorem~\ref{thm:simple}.

\begin{enumerate}
\item[(i)] \citep[][Theorem 3.]{Qiao2009} $p^{-1} \norm{d}^2 \to \delta^2 + \sigma_1^2/n_1 + \sigma_2^2/n_2$

\item[(ii)] \citep[][Theorem 6.]{Qiao2009} $\cos[\mbox{Angle}(d, \mu)] \to \left(\frac{\delta^2}{\delta^2 + \sigma_1^2/n_1 + \sigma_2^2/n_2} \right)^{1/2}$ in probability as $p \to \infty$.

\item[(iii)] \citep[][Theorem 1.]{Hall2005} If $\delta^2 > |\sigma_1^2 /n_1 - \sigma_2^2 / n_2|$, then the probability that a new datum from either $N(\mu_1,\Sigma_1)$ or $N(\mu_2,\Sigma_2)$ population is correctly classified by the centroid discrimination rule converges to 1 as $p\to\infty$. Here, the centroid discrimination rule classifies a new observation $x$ to the first group, if $\|x -\bar{x}_1\| < \|x -\bar{x}_2\| $.

\item[(iv)] \citep[][Theorem 1.]{Jung2009a} Each of $n_1+n_2-2$ nonzero eigenvalues of $p^{-1}S_W$ converges to either $\sigma_1^2$ or $\sigma_2^2$ in probability as $p \to\infty$.

\item[(v)] $\mbox{Angle}(d, \mbox{range}(S_W)) \to \pi/2$ in probability as $p \to \infty$.
\end{enumerate}
\end{lem}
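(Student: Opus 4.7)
\textbf{Proof proposal for Theorem~\ref{thm:simple}.}

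The plan is to show that the sample continuum direction $\hat{w}_\alpha$ is asymptotically parallel to the sample mean-difference $d$, while the population continuum direction $\omega_\alpha$ is asymptotically parallel to $\mu$ (already established in Lemma~\ref{thm:true}). Once both alignments are in hand, part (i) reduces to Lemma~\ref{lem:previous_result}(ii) on $\mbox{Angle}(d,\mu)$, and part (ii) reduces to the Hall--Marron--Neeman centroid-rule classification result, Lemma~\ref{lem:previous_result}(iii).

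\emph{Step 1: reduce $\hat{w}_\alpha$ to a within-class ridge form.} Using $S_T = S_W + c_0 d d^T$ with $c_0 = n_1 n_2/n^2$, apply the Sherman--Morrison formula to
\[
(S_W + \alpha_p I_p + c_0 d d^T)^{-1} d,
\]
which yields $(S_T + \alpha_p I_p)^{-1} d \propto (S_W + \alpha_p I_p)^{-1} d$. Hence $\hat{w}_\alpha \propto (S_W + \alpha_p I_p)^{-1} d$.

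\emph{Step 2: show $\mbox{Angle}(\hat{w}_\alpha, d) \to 0$ in probability.} Decompose $d = d_R + d_N$ with $d_R$ the orthogonal projection of $d$ onto $\mbox{range}(S_W)$ and $d_N$ its orthogonal complement. On $\mbox{range}(S_W)^\perp$, the operator $(S_W+\alpha_p I_p)^{-1}$ acts as $\alpha_p^{-1}I = (\alpha p)^{-1} I$. On $\mbox{range}(S_W)$, which has dimension at most $n-2$, Lemma~\ref{lem:previous_result}(iv) says the nonzero eigenvalues of $S_W$ are of order $p$, so $(S_W+\alpha_p I_p)^{-1}$ acts with operator norm $O(1/p)$ on this subspace. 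Thus $\|(S_W+\alpha_p I_p)^{-1} d_R\| = O(p^{-1}\|d_R\|)$ and $\|(S_W+\alpha_p I_p)^{-1} d_N\| = (\alpha p)^{-1}\|d_N\|$. By Lemma~\ref{lem:previous_result}(v), $\|d_R\|/\|d_N\| \to 0$, so
\[
\hat{w}_\alpha \propto (S_W + \alpha_p I_p)^{-1} d = (\alpha p)^{-1} d_N + O_p(p^{-1}\|d_R\|),
\]
and the ratio of the second term's norm to the first's vanishes. Therefore $\mbox{Angle}(\hat{w}_\alpha, d_N) \to 0$; combined with $\mbox{Angle}(d,d_N) \to 0$ (again from (v)), we conclude $\mbox{Angle}(\hat{w}_\alpha, d) \to 0$ in probability.

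\emph{Step 3: assemble part (i).} Lemma~\ref{thm:true} gives $\mbox{Angle}(\omega_\alpha, \mu) \to 0$, and Step 2 gives $\mbox{Angle}(\hat{w}_\alpha, d) \to 0$. By the triangle inequality for angles and Lemma~\ref{lem:previous_result}(ii),
\[
\mbox{Angle}(\omega_\alpha, \hat{w}_\alpha) \to \mbox{Angle}(\mu, d) \to \cos^{-1}\left(\frac{\delta^2}{\delta^2 + \sigma_1^2/n_1 + \sigma_2^2/n_2}\right)^{1/2}.
\]

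\emph{Step 4: part (ii) via equivalence with the centroid rule.} CDA classifies $x_*$ by the sign of $\hat{w}_\alpha^T\bigl(x_* - \tfrac{1}{2}(\bar{x}_1 + \bar{x}_2)\bigr)$, whereas the centroid rule uses the sign of $d^T\bigl(x_* - \tfrac{1}{2}(\bar{x}_1 + \bar{x}_2)\bigr)$. Since Step 2 shows $\hat{w}_\alpha$ is asymptotically parallel to $d$ (with matching orientation once we fix signs via $\hat{w}_\alpha^T d > 0$), the two classifiers are asymptotically equivalent on the Hall--Marron--Neeman geometric representation. Invoking Lemma~\ref{lem:previous_result}(iii) then yields classification consistency under $\delta^2 > |\sigma_1^2/n_1 - \sigma_2^2/n_2|$. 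To make this rigorous one should quantify the gap between the two linear decision statistics using the vanishing angle bound and the fact that the HDLSS geometric representation puts new observations at $\|x_* - \bar{x}_i\|^2$ of order $p$ with explicit $O(p)$ separation under the stated signal condition.

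The main obstacle is Step 2, specifically bounding $\|(S_W+\alpha_p I)^{-1} d_R\|$ relative to $\|(S_W+\alpha_p I)^{-1} d_N\|$: this requires combining the convergence of the $n-2$ nonzero eigenvalues of $p^{-1}S_W$ (Lemma~\ref{lem:previous_result}(iv)) with the asymptotic near-orthogonality of $d$ to $\mbox{range}(S_W)$ (Lemma~\ref{lem:previous_result}(v)). Both components shrink at the same $1/p$ rate, so the alignment $\hat{w}_\alpha \approx d_N$ comes not from a scale gap in the operator, but from $\|d_R\| = o(\|d_N\|)$ coming from (v).
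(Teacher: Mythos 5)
Your proposal does not prove the statement it was asked to prove. The statement is Lemma~\ref{lem:previous_result}, whose parts (i)--(iv) are imported from the cited references and whose only part requiring an argument is (v): $\mbox{Angle}(d, \mbox{range}(S_W)) \to \pi/2$ in probability. What you have written is instead a proof sketch of Theorem~\ref{thm:simple}, and in Step~2 you explicitly \emph{invoke} Lemma~\ref{lem:previous_result}(v) (``By Lemma~\ref{lem:previous_result}(v), $\|d_R\|/\|d_N\|\to 0$'') as a known fact. You even flag in your closing paragraph that the whole argument hinges on (v). So the one claim that needed to be established is assumed, and the target statement is left unproven.

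For reference, the paper's proof of (v) is short and elementary: the column space of $S_W$ is spanned by the finitely many centered observations $x_{ij}-\bar{x}_{i\cdot}$, and for each such vector one shows $\mbox{Angle}(d,\,x_{ij}-\bar{x}_{i\cdot})\to\pi/2$ by computing that $p^{-1}\|x_{11}-\bar{x}_{1\cdot}\|^2 \to \sigma_1^2(n_1-1)/n_1$ while $p^{-1}d^T(x_{11}-\bar{x}_{1\cdot})\to 0$ in probability (together with the norm asymptotics for $d$ from part~(i)). Since the spanning set has fixed cardinality $n$ as $p\to\infty$, asymptotic orthogonality of $d$ to each spanning vector gives asymptotic orthogonality to the whole subspace. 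If you want to salvage your work, supply this argument (or an equivalent one) for (v); your Steps~1--4 would then belong to the proof of Theorem~\ref{thm:simple}, where they do in fact track the paper's route via the decomposition of $\hat{w}_\alpha$ along $\mbox{range}(S_W)$ and its complement.
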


\begin{proof}
[Proof of Lemma~\ref{lem:previous_result}]
The statements (i)-(iv) are modified from the original statements of the referenced theorems, and easily justified.

A proof of (v) is obtained by the following two facts. First, the column space of $S_W$ is spanned by $\{x_{ij}-\bar{x}_i\}$. Second, for each $(i,j)$, $\mbox{Angle}(d,x_{ij}-\bar{x}_i) \to 0 $ in probability as $p\to\infty$. The second result is obtained from the facts $p^{-1}\|x_{11} - \bar{x}_1\|^2 \to \sigma_1^2(n-1)/n$, and $p^{-1} d^T(x_{11} - \bar{x}_1) \to 0$ in probability as $p\to\infty$, as well as Lemma~\ref{lem:previous_result}(ii).
\end{proof}

Write the eigendecomposition of $S_W$ by $S_W = \widehat{U}_1\widehat{\Lambda}_W\widehat{U}_1^T$, where $\widehat{U}_1$ collects the $(n_1+n_2-2)$-dimensional eigenspace, corresponding to nonzero eigenvalues. Let $\widehat{U}_2$ denote the orthogonal basis matrix for the nullspace of $S_W$. Then $\widehat{U} = [\widehat{U}_1,\widehat{U}_2]$ is the $p\times p$ orthogonal matrix, satisfying $\widehat{U}\widehat{U}^T = \widehat{U}^T\widehat{U} = I_p$. Write  $d_1 = \widehat{U}_1^T d$, $d_2 = \widehat{U}_2^T d$ and $N = n_1+n_2-2$. Then, we can write
\begin{equation}\label{eq:decomposition}
\hat{w}_\alpha^R
\propto \widehat{U}_1 (\widehat{\Lambda}_W + \alpha_p I_{N})^{-1} d_1
 + \alpha_p^{-1}\widehat{U}_2 d_2 := b_\alpha.
\end{equation}

The following intermediate result concerning (\ref{eq:decomposition}) will be handy.
\begin{lem}\label{lem:new_lem}
Assume the condition of Theorem~\ref{thm:simple}.

(i) $p^{-1}\norm{d_1}^2 \to 0$, and $p^{-1}\norm{d_2}^2 \to \delta^2 + \sigma_1^2/n_1 + \sigma_2^2/n_2$  in probability as $p\to\infty$

(ii) $\mbox{Angle}(b_\alpha,d) \to 0 $ in probability as $p\to\infty$.
\end{lem}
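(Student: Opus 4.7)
The plan is to reduce both parts to three already-established ingredients: the orthogonality of $\widehat U = [\widehat U_1, \widehat U_2]$, Lemma~\ref{lem:previous_result}(v) on the asymptotic angle between $d$ and $\mbox{range}(S_W)$, and Lemma~\ref{lem:previous_result}(iv) controlling the nonzero eigenvalues of $S_W$. Throughout, set $L := \delta^2 + \sigma_1^2/n_1 + \sigma_2^2/n_2$, so that Lemma~\ref{lem:previous_result}(i) reads $p^{-1}\|d\|^2 \to L$.

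For part (i), orthogonality of $\widehat U$ gives the Pythagorean identity $\|d\|^2 = \|d_1\|^2 + \|d_2\|^2$, and the cosine of $\mbox{Angle}(d, \mbox{range}(S_W))$ coincides with $\|d_1\|/\|d\|$ because $\mbox{range}(\widehat U_1) = \mbox{range}(S_W)$. Lemma~\ref{lem:previous_result}(v) then forces $\|d_1\|^2/\|d\|^2 \to 0$ in probability, and multiplying through by $p^{-1}\|d\|^2 \to L$ delivers $p^{-1}\|d_1\|^2 \to 0$ and $p^{-1}\|d_2\|^2 \to L$.

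For part (ii), the key step is to rescale $b_\alpha$ by the scalar $\alpha p$ so that the nullspace block becomes $d_2$ rather than $\alpha_p^{-1}d_2$:
\begin{equation*}
\alpha p \, b_\alpha = \widehat U_1 D_\alpha d_1 + \widehat U_2 d_2, \qquad D_\alpha := \alpha p \, (\widehat\Lambda_W + \alpha p \, I_N)^{-1}.
\end{equation*}
By Lemma~\ref{lem:previous_result}(iv) the nonzero diagonal entries of $p^{-1}\widehat\Lambda_W$ converge to $\sigma_1^2$ or $\sigma_2^2$, so each diagonal entry of $D_\alpha$ converges to $\alpha/(\sigma_j^2 + \alpha)$ and $\|D_\alpha\|_{\mbox{op}}$ is eventually bounded. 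Decomposing $d = \widehat U_1 d_1 + \widehat U_2 d_2$ in the same basis and using $\widehat U_1^T \widehat U_2 = 0$, direct computation gives
\begin{equation*}
p^{-1} \alpha p \, b_\alpha^T d = p^{-1}\bigl(d_1^T D_\alpha d_1 + \|d_2\|^2\bigr), \qquad p^{-1}(\alpha p)^2 \|b_\alpha\|^2 = p^{-1}\bigl(d_1^T D_\alpha^2 d_1 + \|d_2\|^2\bigr).
\end{equation*}
By part (i) and the Cauchy--Schwarz bound $|d_1^T D_\alpha^k d_1| \le \|D_\alpha\|_{\mbox{op}}^k \|d_1\|^2$, both $d_1$-terms vanish in the limit while the remainders converge to $L$. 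Combined with $p^{-1}\|d\|^2 \to L$, this yields $\cos(\mbox{Angle}(b_\alpha,d)) \to L/(\sqrt{L}\sqrt{L}) = 1$, i.e.\ the angle tends to zero.

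The main obstacle is handling the sign of $\alpha$. For $\alpha < 0$ the scalar $\alpha p$ and some entries of $D_\alpha$ are negative, so the argument must be read with $\mbox{Angle}(\cdot,\cdot)$ interpreted as the acute angle between one-dimensional directions (which is consistent with the form $\cos^{-1}(\cdot)^{1/2}$ appearing in Theorem~\ref{thm:simple}(i)). One must additionally exclude the measure-zero set $\{-\sigma_1^2, -\sigma_2^2\}$ at which a diagonal entry of $D_\alpha$ would diverge; outside this set the operator-norm bound on $D_\alpha$ holds and the computation above closes the argument.
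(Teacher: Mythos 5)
Your proof is correct and follows essentially the same route as the paper's: part (i) via the Pythagorean split of $d$ together with Lemma~\ref{lem:previous_result}(i) and (v), and part (ii) by expanding $b_\alpha^T d$ and $\norm{b_\alpha}^2$ in the $[\widehat U_1,\widehat U_2]$ basis, killing the $d_1$ terms with Lemma~\ref{lem:previous_result}(iv) and part (i), and letting the $d_2$ term drive the cosine to $1$. Your extra care with the sign of $\alpha$ (reading the angle as acute, i.e.\ replacing $\alpha^{-1}$ by $|\alpha|^{-1}$ in the limits) and with the degenerate values $\alpha\in\{-\sigma_1^2,-\sigma_2^2\}$ is a minor refinement of points the paper glosses over, not a different argument.
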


\begin{proof}[Proof of Lemma~\ref{lem:new_lem}]
In this proof, every convergence is a convergence in probability as $p\to\infty$.

For a proof of (i), by Lemma~\ref{lem:previous_result}(i), showing $p^{-1}\norm{d_1}^2 \to 0$ is enough. From Lemma~\ref{lem:previous_result}(v), we have $\norm{\widehat{U}_1'd}/ \norm{d} = \cos(\mbox{Angle}(d, \mbox{range}(S_W)) \to 0$. Then
$p^{-1/2}\norm{d_1} = p^{-1/2}\norm{\widehat{U}_1'd} = p^{-1/2} \norm{d} (\norm{\widehat{U}_1'd}/ \norm{d})$, which converges to 0
since $ p^{-1/2} \norm{d}$ is stochastically bounded.

For (ii), we will show that
$| d^T b_\alpha| / \norm{b_\alpha}{\norm{d}} \to 1$.
From (\ref{eq:decomposition}), we have
\begin{equation}\label{eq:decomposition2}
p\norm{b_\alpha}^2 =\norm{ ( p^{-1}\widehat{\Lambda}_W + \alpha I_{N})^{-1} \frac{d_1}{\sqrt{p}} }^2 + \frac{\norm{d_2}^2}{\alpha^{2}p}.
\end{equation}
By Lemma~\ref{lem:previous_result}(iv), each element in the $N\times N$ matrix $(p^{-1}\widehat{\Lambda}_W + \alpha I_{N})$ converges to either $\sigma_1^2+\alpha$ or $\sigma_2^2+\alpha$. This fact and the part (i) shown above lead that the first term of (\ref{eq:decomposition2}) converges to 0. Therefore we have
\begin{equation}\label{eq:decomposition-int1}
p^{1/2}\norm{b_\alpha} \to \alpha^{-1} (\delta^2 + \sigma^2/n_1 + \tau^2/n_2)^{1/2}.
\end{equation}

Similarly, using the decomposition (\ref{eq:decomposition}), and Lemma~\ref{lem:previous_result}(iv) and the part (i) of Lemma~\ref{lem:new_lem}, we have
\begin{equation}\label{eq:decomposition-int2}
|d^Tb_\alpha| = p^{-1} d_1^T ( p^{-1}\widehat{\Lambda}_W + \alpha I_{N})^{-1} d_1 + \alpha^{-1}p^{-1}\norm{d_2}^2 \to \alpha^{-1} (\delta^2 + \sigma^2/n_1 + \tau^2/n_2) .
\end{equation}
Combining (\ref{eq:decomposition-int1}), (\ref{eq:decomposition-int2}) and Lemma~\ref{lem:previous_result}(i), we get
$$
\frac{| d^T b_\alpha| }{ \norm{b_\alpha}{\norm{d}}}
= \frac{| d^T b_\alpha| }{ (p^{1/2}\norm{b_\alpha} )(p^{-1/2}{\norm{d}})} \to 1,$$
as desired.
\end{proof}

We are now ready to prove Theorem~\ref{thm:simple}.

\begin{proof}[Proof of Theorem \ref{thm:simple}]
To show (i), it is enough to combine the results from Lemma~\ref{thm:true}, Lemma~\ref{lem:previous_result}(ii) and Lemma~\ref{lem:new_lem}(ii), which describes the asymptotic angles between the pairs $(\omega_\alpha,\mu)$, $(\mu,d)$, and $(d,\hat{w}_\alpha)$, respectively.

The statement (ii) is obtained by Lemma~\ref{lem:new_lem}(ii) and Lemma~\ref{lem:previous_result}(iii).
\end{proof}

\section*{References}

 \bibliographystyle{elsarticle-harv}
\bibliography{library}

\begin{thebibliography}{42}
\expandafter\ifx\csname natexlab\endcsname\relax\def\natexlab#1{#1}\fi
\expandafter\ifx\csname url\endcsname\relax
  \def\url#1{\texttt{#1}}\fi
\expandafter\ifx\csname urlprefix\endcsname\relax\def\urlprefix{URL }\fi

\bibitem[{Ahn et~al.(2012)Ahn, Lee, and Yoon}]{Ahn2012}
Ahn, J., Lee, M.~H., Yoon, Y.~J., 2012. {Clustering high dimension, low sample
  size data using the maximal data piling distance}. Statistica Sinica 22~(2),
  443--464.

\bibitem[{Ahn and Marron(2010)}]{Ahn2010}
Ahn, J., Marron, J.~S., 2010. {The maximal data piling direction for
  discrimination}. Biometrika 97~(1), 254--259.

\bibitem[{Ahn et~al.(2007)Ahn, Marron, Muller, and Chi}]{Ahn2007}
Ahn, J., Marron, J.~S., Muller, K.~M., Chi, Y.-Y., 2007. {The high-dimension,
  low-sample-size geometric representation holds under mild conditions}.
  Biometrika 94~(3), 760--766.

\bibitem[{Aoshima et~al.(2018)Aoshima, Shen, Shen, Yata, Zhou, and
  Marron}]{AOSHIMA2017}
Aoshima, M., Shen, D., Shen, H., Yata, K., Zhou, Y.-H., Marron, J.~S., 2018. A
  survey of high dimension low sample size asymptotics. Australian \& New
  Zealand Journal of Statistics to appear.

\bibitem[{Bhattacharjee et~al.(2001)Bhattacharjee, Richards, Staunton, Li,
  Monti, Vasa, Ladd, Beheshti, Bueno, Gillette, et~al.}]{Bhattacharjee2001}
Bhattacharjee, A., Richards, W.~G., Staunton, J., Li, C., Monti, S., Vasa, P.,
  Ladd, C., Beheshti, J., Bueno, R., Gillette, M., et~al., 2001.
  {Classification of human lung carcinomas by mRNA expression profiling reveals
  distinct adenocarcinoma subclasses}. Proc. Natl. Acad. Sci. USA 98(24):137.

\bibitem[{Bickel and Levina(2004)}]{Bickel2004}
Bickel, P., Levina, E., 2004. {Some theory for Fisher's linear discriminant
  function, `naive Bayes', and some alternatives when there are many more
  variables than observations}. Bernoulli 10~(6), 989--1010.

\bibitem[{Bjorkstrom and Sundberg(1999)}]{Bjorkstrom1999}
Bjorkstrom, A., Sundberg, R., 1999. {A Generalized View on Continuum
  Regression}. Scandinavian Journal of Statistics 26~(1), 17--30.

\bibitem[{Cai and Liu(2011)}]{Cai2011}
Cai, T., Liu, W., Dec. 2011. {A Direct Estimation Approach to Sparse Linear
  Discriminant Analysis}. Journal of the American Statistical Association
  106~(496), 1566--1577.

\bibitem[{Ciriello et~al.(2015)Ciriello, Gatza, Beck, Wilkerson, Rhie, Pastore,
  Zhang, McLellan, Yau, Kandoth, et~al.}]{ciriello2015comprehensive}
Ciriello, G., Gatza, M.~L., Beck, A.~H., Wilkerson, M.~D., Rhie, S.~K.,
  Pastore, A., Zhang, H., McLellan, M., Yau, C., Kandoth, C., et~al., 2015.
  Comprehensive molecular portraits of invasive lobular breast cancer. Cell
  163~(2), 506--519.

\bibitem[{Clemmensen et~al.(2011)Clemmensen, Hastie, Witten, and
  Ersbll}]{Clemmensen:2011}
Clemmensen, L., Hastie, T., Witten, D., Ersbll, B., 2011. Sparse discriminant
  analysis. Technometrics 53~(4), 406--413.

\bibitem[{Connor et~al.(2012)Connor, Hagmann, and Linton}]{connor2012efficient}
Connor, G., Hagmann, M., Linton, O., 2012. Efficient semiparametric estimation
  of the fama--french model and extensions. Econometrica 80~(2), 713--754.

\bibitem[{Cook et~al.(2013)Cook, Helland, and Su}]{cook2013envelopes}
Cook, R., Helland, I., Su, Z., 2013. Envelopes and partial least squares
  regression. Journal of the Royal Statistical Society: Series B 75~(5),
  851--877.

\bibitem[{Cook et~al.(2010)Cook, Li, and Chiaromonte}]{cook2010envelope}
Cook, R.~D., Li, B., Chiaromonte, F., 2010. Envelope models for parsimonious
  and efficient multivariate linear regression. Statistica Sinica 20,
  927--1010.

\bibitem[{Cook and Ni(2005)}]{cook2005sufficient}
Cook, R.~D., Ni, L., 2005. Sufficient dimension reduction via inverse
  regression: A minimum discrepancy approach. Journal of the American
  Statistical Association 100~(470), 410--428.

\bibitem[{de~Jong and Farebrother(1994)}]{DeJong1994}
de~Jong, S., Farebrother, R.~W., 1994. {Extending the relationship between
  ridge regression and continuum regression}. Chemometrics and Intelligent
  Laboratory Systems 25~(2), 179--181.

\bibitem[{Fan and Fan(2008)}]{Fan2008}
Fan, J., Fan, Y., 2008. {High Dimensional Classification Using Features
  Annealed Independence Rules}. Annals of Statistics 36~(6), 2605--2637.

\bibitem[{Fan et~al.(2012)Fan, Feng, and Tong}]{fan2012road}
Fan, J., Feng, Y., Tong, X., 2012. A road to classification in high dimensional
  space: the regularized optimal affine discriminant. Journal of the Royal
  Statistical Society: Series B 74~(4), 745--771.

\bibitem[{Fan et~al.(2016)Fan, Liao, and Wang}]{fan2016projected}
Fan, J., Liao, Y., Wang, W., 2016. Projected principal component analysis in
  factor models. Annals of statistics 44~(1), 219.

\bibitem[{Fisher(1936)}]{Fisher1936}
Fisher, R.~A., 1936. {The use of multiple measurements in taxonomic problems}.
  Annals of Eugenics 7~(2), 179--188.

\bibitem[{Golub et~al.(1999)Golub, Slonim, Tamayo, Huard, Gaasenbeek, Mesirov,
  Coller, Loh, Downing, Caligiuri, Bloomfield, and Lander}]{Golub1999}
Golub, T.~R., Slonim, D.~K., Tamayo, P., Huard, C., Gaasenbeek, M., Mesirov,
  J.~P., Coller, H., Loh, M.~L., Downing, J.~R., Caligiuri, M.~a., Bloomfield,
  C.~D., Lander, E.~S., 1999. {Molecular classification of cancer: class
  discovery and class prediction by gene expression monitoring.} Science (New
  York, N.Y.) 286~(5439), 531--537.

\bibitem[{Hall et~al.(2005)Hall, Marron, and Neeman}]{Hall2005}
Hall, P., Marron, J.~S., Neeman, A., 2005. {Geometric representation of high
  dimension, low sample size data}. Journal of the Royal Statistical Society:
  Series B 67~(3), 427--444.

\bibitem[{Hastie et~al.(2009)Hastie, Tibshirani, and Friedman}]{Hastie2009}
Hastie, T., Tibshirani, R., Friedman, J., 2009. {Elements of Statistical
  Learning}. Springer, Berlin.

\bibitem[{Izenman(1975)}]{izenman1975reduced}
Izenman, A.~J., 1975. Reduced-rank regression for the multivariate linear
  model. Journal of Multivariate Analysis 5~(2), 248--264.

\bibitem[{Jung and Marron(2009)}]{Jung2009a}
Jung, S., Marron, J.~S., Dec. 2009. {PCA consistency in high dimension, low
  sample size context}. The Annals of Statistics 37~(6B), 4104--4130.

\bibitem[{Jung et~al.(2012)Jung, Sen, and Marron}]{Jung2012}
Jung, S., Sen, A., Marron, J., 2012. Boundary behavior in high dimension, low
  sample size asymptotics of pca. Journal of Multivariate Analysis 109,
  190--203.

\bibitem[{Lee et~al.(2014)Lee, Dobbin, and Ahn}]{lee2014covariance}
Lee, J., Dobbin, K.~K., Ahn, J., 2014. Covariance adjustment for batch effect
  in gene expression data. Statistics in Medicine 33~(15), 2681--2695.

\bibitem[{Lee et~al.(2013)Lee, Ahn, and Jeon}]{lee2013hdlss}
Lee, M.~H., Ahn, J., Jeon, Y., 2013. Hdlss discrimination with adaptive data
  piling. Journal of Computational and Graphical Statistics 22~(2), 433--451.

\bibitem[{Li et~al.(2016)Li, Yang, Nobel, and Shen}]{li2016supervised}
Li, G., Yang, D., Nobel, A.~B., Shen, H., 2016. Supervised singular value
  decomposition and its asymptotic properties. Journal of Multivariate Analysis
  146, 7--17.

\bibitem[{Lock et~al.(2013)Lock, Hoadley, Marron, and Nobel}]{lock2013joint}
Lock, E.~F., Hoadley, K.~A., Marron, J.~S., Nobel, A.~B., 2013. Joint and
  individual variation explained (jive) for integrated analysis of multiple
  data types. The Annals of Applied Statistics 7~(1), 523.

\bibitem[{Marron et~al.(2007)Marron, Todd, and Ahn}]{Marron2007}
Marron, J.~S., Todd, M.~J., Ahn, J., 2007. Distance weighted discrimination.
  Journal of the American Statistical Association 102~(480), 1267--1271.

\bibitem[{Qiao et~al.(2010)Qiao, Zhang, Liu, Todd, and Marron}]{Qiao2009}
Qiao, X., Zhang, H.~H., Liu, Y., Todd, M.~J., Marron, J., 2010. Weighted
  distance weighted discrimination and its asymptotic properties. Journal of
  the American Statistical Association 105~(489), 401--414.

\bibitem[{Shao et~al.(2011)Shao, Wang, Deng, and Wang}]{Shao2011}
Shao, J., Wang, Y., Deng, X., Wang, S., Apr. 2011. {Sparse linear discriminant
  analysis by thresholding for high dimensional data}. The Annals of Statistics
  39~(2), 1241--1265.

\bibitem[{Stone and Brooks(1990)}]{Stone1990}
Stone, M., Brooks, R.~J., 1990. Continuum regression: Cross-validated
  sequentially constructed prediction embracing ordinary least squares, partial
  least squares and principal components regression. Journal of the Royal
  Statistical Society: Series B 52~(2), 237--269.

\bibitem[{Sundberg(1993)}]{Sundberg1993}
Sundberg, R., 1993. Continuum regression and ridge regression. Journal of the
  Royal Statistical Society: Series B 55~(3), 653--659.

\bibitem[{Tso(1981)}]{tso1981reduced}
Tso, M.-S., 1981. Reduced-rank regression and canonical analysis. Journal of
  the Royal Statistical Society: Series B, 183--189.

\bibitem[{Vapnik(2013)}]{vapnik2013nature}
Vapnik, V., 2013. The Nature of Statistical Learning Theory. Springer, Berlin.

\bibitem[{Wang et~al.(2011{\natexlab{a}})Wang, Mo, Ozolek, and
  Rohde}]{Wang2011}
Wang, W., Mo, Y., Ozolek, J.~A., Rohde, G.~K., 2011{\natexlab{a}}. Penalized
  {Fisher} discriminant analysis and its application to image-based
  morphometry. Pattern Recognition Letters 32~(15), 2128--2135.

\bibitem[{Wang et~al.(2011{\natexlab{b}})Wang, Ozolek, Slep\v{c}ev, Lee, Chen,
  and Rohde}]{Wang2011b}
Wang, W., Ozolek, J.~A., Slep\v{c}ev, D., Lee, A.~B., Chen, C., Rohde, G.~K.,
  2011{\natexlab{b}}. {An optimal transportation approach for nuclear
  structure-based pathology.} IEEE Transactions on Medical Imaging 30~(3),
  621--631.

\bibitem[{Witten and Tibshirani(2009)}]{witten2009extensions}
Witten, D.~M., Tibshirani, R.~J., 2009. Extensions of sparse canonical
  correlation analysis with applications to genomic data. Statistical
  Applications in Genetics and Molecular Biology 8~(1), 1--27.

\bibitem[{Wu et~al.(2009)Wu, Zhang, Wang, Christiani, and Lin}]{Wu2009}
Wu, M.~C., Zhang, L., Wang, Z., Christiani, D.~C., Lin, X., 2009. Sparse linear
  discriminant analysis for simultaneous testing for the significance of a gene
  set/pathway and gene selection. Bioinformatics 25~(9), 1145--51.

\bibitem[{Yata and Aoshima(2009)}]{yata2009a}
Yata, K., Aoshima, M., Aug. 2009. {PCA} consistency for non-{G}aussian data in
  high dimension, low sample size context. Communications in Statistics -
  Theory and Methods 38~(16-17), 2634--2652.

\bibitem[{Zhou and Marron(2015)}]{zhou2015high}
Zhou, Y.-H., Marron, J., 2015. High dimension low sample size asymptotics of
  robust {PCA}. Electronic Journal of Statistics 9~(1), 204--218.

\end{thebibliography}




\end{document}